\newcommand{\review}[1]{#1}
\newcommand\keywords{\medskip \noindent \textbf{Keywords:}\ }
\newcommand\acmsubjects{\medskip \noindent \textbf{AMS subject classification:}\ }
\numberwithin{equation}{section}
\numberwithin{table}{section}
\numberwithin{figure}{section}
\theoremstyle{plain}
\newtheorem{theorem}{Theorem}[section]
\newtheorem{lemma}[theorem]{Lemma}
\newtheorem{proposition}[theorem]{Proposition}
\theoremstyle{definition}
\renewcommand{\figref}[1]{Fig.~\ref{fig:#1}}
\newcommand{\tabref}[1]{Tab.~\ref{tab:#1}}
\newcommand{\Din}{D_{\text{in}}}
\newcommand{\yin}{y_{\text{in}}}
\newcommand{\xin}{x_{\text{in}}}
\newcommand{\stoich}{\mathbb{N}}
\newcommand{\Intdom}{\mathbf{Z}}
\newcommand{\VoxVol}{V}
\newcommand{\Prob}{\mathbf{P}}
\title{Modelling Population-Level Hes1 Dynamics: Insights from a
  Multi-Framework Approach}
\author[1]{Gesina Menz\orcidlink{0000-0002-3589-9824}}
\author[1,2]{Stefan Engblom\thanks{\textit{Corresponding author.}  URL:
    \url{http://www.stefanengblom.org}, telephone +46-18-471 27 54,
    fax +46-18-51 19 25.}\orcidlink{0000-0002-3614-1732}}
\affil[1]{{\footnotesize Division of Scientific Computing, Department
    of Information Technology, Uppsala University, SE-751 05 Uppsala,
    Sweden. E-mail: \href{mailto:gesina.menz@it.uu.se}{gesina.menz},
    \href{mailto:stefane@it.uu.se}{stefane@it.uu.se}.}}
\affil[2]{{\footnotesize Science for Life Laboratory, Department of
    Information Technology, Uppsala University}}
\date{\today}
\begin{document}

\maketitle

\begin{abstract}
  Mathematical models of living cells have been successively refined
  with advancements in experimental techniques. A main concern is
  striking a balance between modelling power and the tractability of
  the associated mathematical analysis.

  In this work we model the dynamics for the transcription factor
  Hairy and enhancer of split-1 (Hes1), whose expression oscillates
  during neural development, and which critically enables stable fate
  decision in the embryonic brain. We design, parametrise, and analyse
  a detailed spatial model using ordinary differential equations
  (ODEs) over a grid capturing both transient oscillatory behaviour
  and fate decision on a population-level. We also investigate the
  relationship between this ODE model and a more realistic grid-based
  model involving intrinsic noise using mostly directly biologically
  motivated parameters.

  While we focus specifically on Hes1 in neural development, the
  approach of linking deterministic and stochastic grid-based models
  shows promise in modelling various biological processes taking place
  in a cell population. In this context, our work stresses the
  importance of the interpretability of complex computational models
  into a framework which is amenable to mathematical analysis.

  \keywords{fate decision, neurogenesis, cellular synchronisation,
    genetic oscillator, pattern formation.}

  \acmsubjects{\textit{Primary:} 92-10, 92B25, 92C15;
    \textit{secondary:} 34A33, 60J20, 34C60, 34F10.}



    

  \medskip
  \noindent
  \textbf{Statements and Declarations:} This work was partially funded
  by support from the Swedish Research Council under project number VR
  2019-03471. The authors declare no competing interests.
\end{abstract}


\section{Introduction}


The Hes1 protein is part of a family of helix-loop-helix repressors
which sustain progenitor cells during development and induce binary
cell differentiation processes \cite{Kageyama2007TheEmbryogenesis}.
Hes1, specifically, plays an important role during neuronal
development and the development of parts of the digestive tract during
embryogenesis, as well as being found to contribute in tumours by ways
of maintaining cancer stem cells and aiding metastasis
\cite{Shimojo2011DynamicCells, Kageyama2007TheEmbryogenesis,
Liu2015Hes1:Resistance}. The exact molecular interactions of these
processes, however, are not yet entirely understood
\cite{Kobayashi2014ExpressionDiseases}, making Hes1 interesting for
mathematical modelling purposes to investigate potential interactions.

To maintain neural progenitor cells, Hes1 oscillates due to a negative
feedback loop between the Hes1 protein and the Hes1 gene
\cite{Hirata2002OscillatoryLoop, Shimojo2011DynamicCells}.
Interactions between the Hes1 negative feedback loop with the
Delta-Notch pathway, a well-conserved developmental pathway
influencing organ development \review{have, in particular, been
observed in developing neural tissue of model organisms such as mice
\cite{Imayoshi2013,Shimojo2016OscillatoryMorphogenesis} and with the
Hes1 homologue Her6 in zebrafish \cite{Soto2020her6}. These
interactions then lead to oscillations throughout a cell population
for a few cycles which dampen over time
\cite{Shimojo2008hes1notch,Phillips2016hes1mir9}, finally resulting in
a sustained ``salt and pepper pattern'' of cells with high and low
levels of Hes1 throughout the population \cite{Imayoshi2013,
Artavanis1999Notch, Kageyama2007TheEmbryogenesis}. Within this
pattern, cells with low Hes1 levels differentiate into neurons via
lateral inhibitions while cells with high levels of Hes1 become
supporting glial cells as observed during, e.g., mouse brain
development \cite{Imayoshi2013,Kageyama2008notch}.}  To allow for the
development of sufficient numbers of each cell type, progenitor cells
need to be maintained at appropriate levels
\cite{Shimojo2011DynamicCells}. Although originally believed to act
like a molecular clock similar to the cell cycle, more recent research
suggests that Hes1 oscillations do not specifically time neural
development during embryogenesis but rather allow cells to stay
undifferentiated for a sufficient amount of time before
differentiation to allow appropriate tissue composition
\cite{Hirata2002OscillatoryLoop, Kobayashi2014ExpressionDiseases}. In
this context, however, all details and functions of Hes1 behaviour
have not yet been understood leading to various mathematical models
seeking to understand and/or explain aspects of these highly complex
molecular interactions. We next review a few modelling frameworks that
have been proposed for the Hes1 system.

%

One type of model that has been explored multiple times is a
relatively simple ordinary differential equation (ODE) model in a
single cell aimed purely at understanding how oscillations can occur
via a negative feedback loop such as in the Hes1 system. Such work has
been done by investigating how Hes1 protein, Hes1 mRNA and an
intermediary factor interact \cite{Hirata2002OscillatoryLoop}, what
role \emph{delay} plays in establishing oscillations
\cite{Barrio2006stochasticdelay, Monk2003OscillatoryDelays,
Jensen2003Hes1DDE, Momiji2008DissectingOscillator}, as well as the
function of dimerisation of the Hes1 protein before it attaches to the
Hes1 promoter \cite{Zeiser2007Hes1Dimerisation}, showing that each of
these models can generate sustained oscillations.

Single cell models have been extended to include more detailed ODE and
partial differential equation (PDE) descriptions.  \review{These
models account for interactions between the Hes1 negative feedback
loop and other cellular pathways, such as the cell cycle
\cite{Pfeuty2015ADynamics}, the cell-internal dynamics including the
accumulation of the microRNA miR-9
\cite{Goodfellow2014mir9,Phillips2016hes1mir9} and the Notch pathway
\cite{Pfeuty2022generalhes,Tiedemann2017ModelingNeurogenesis}, as well
as the spatial distribution of components throughout the cell
\cite{Agrawal2009ComputationalSignaling}. These refinements preserve
oscillatory behaviour while, under specific conditions, allowing for
stable pattern formation in a cell population.}

\review{Additionally, the gene regulatory networks (GRNs) of both the
Hes1 and the related Hes5 proteins have been explored using various
stochastic modelling approaches.  The Hes1 GRN, in particular, has
been effectively captured using the reaction-diffusion master equation
(RDME) approach. By modelling the Hes1 signalling pathway within a
single cell represented by a computational mesh, the RDME method
provides a detailed view of the system by modelling gene regulatory
behaviour on a single interaction basis. This approach not only
captures oscillatory dynamics even in the presence of noise but also
allows for deeper investigation into how nuclear transport and
dimerisation influence the system \cite{Sturrock2013Hes1GRN,
Sturrock2014Dimerisation}.  Similarly, both the Hes1 and the Hes5 GRN
have been modelled using both delay stochastic models such as delay
stochastic simulation algorithm models
\cite{Barrio2006stochasticdelay}, chemical master equation
\cite{Phillips2016hes1mir9} and stochastic differential equation
methods \cite{Manning2019hes5,Hawley2022hes5}, highlighting the
versatility of stochastic approaches in capturing the complex dynamics
of these networks as well as examining how delays contribute to
typical oscillatory behaviour in both single cell
\cite{Agrawal2009ComputationalSignaling,Sturrock2014Dimerisation,Barrio2006stochasticdelay,Phillips2016hes1mir9,Goodfellow2014mir9,Manning2019hes5}
and multicellular
\cite{Pfeuty2022generalhes,Tiedemann2017ModelingNeurogenesis,Hawley2022hes5}
environments.
Although these models describe the Hes1, and related Hes5, pathway in
greater detail, they are also increasingly complex, making it hard to
understand their behaviour analytically.}

Zooming out from the Hes1 specifics and focusing mainly on
developmental patterning in general, the Delta-Notch pathway has been
modelled in multiple ways: From very basic models to determine
patterning behaviour while remaining conducive to analysis
\cite{Collier1996PatternSignalling}, to further extensions including
protrusions and, thus, inducing more extensive patterns than salt and
pepper patterns \cite{Cohen2010, Sprinzak2011, Hadjivasiliou2016,
Engblom2019NDR}. \review{In the two-cell and one-dimensional case,
oscillatory behaviour followed by stable patterning has also been
found by including delay into the Delta-Notch system
\cite{Veflingstad2005hetsteadystate,Momiji2009bistable}.}
Investigations of travelling wavefronts within neurogenesis and the
influence of cell morphology on patterning behaviour
\cite{FormosaJordan2012, Saleh2021CellMorphology} have shown that
patterning is stable across different environments. While some
previous models explicitly include the Hes1-Notch connection
\cite{Agrawal2009ComputationalSignaling, Pfeuty2015ADynamics}, models
purely focusing on the Delta-Notch pathway are also interesting to us
since they have formalised the description of Delta-Notch behaviour
and are amenable to mathematical analysis thanks to a lower model
complexity \cite{Collier1996PatternSignalling}.

We aim at investigating models across different frameworks and start
by modelling the underlying GRN using an ODE system on a grid, based
on the schematics of the biological process. For this we use
parameters drawn from the literature and otherwise determined to the
best of our knowledge. This model captures the oscillatory behaviour
followed by \review{stable expression of Hes1 which indicates} fate
decision \review{into either neurons (at low levels of Hes1) or glial
cells (at high levels)} while keeping the number of modelled molecular
regulators to a minimum. However, this system is still fairly complex
and difficult to analyse so we reduce it to a two-dimensional and even
scalar ODE using quasi-steady state assumptions. In this way, we find
four apparently different but closely related reduced systems which,
although they do not capture the oscillations, allow us to analyse the
timing and behaviour of the fate decision process. We further extend
our ODE model to a spatial stochastic RDME model
\cite{Engblom2009RDME, URDMEpaper} to be able to experience with the
system's stability to intrinsic noise.

We have structured the paper as follows. In \S\ref{sec:models} we
detail the Hes1-Notch signalling model under consideration. Sources of
stochasticity from intrinsic cellular noise as well as spatial effects
are included. We analyse the spectral properties of the model in
\S\ref{sec:analysis}, first assuming a deterministic framework and
using linear stability analysis in space. We investigate the precision
of the analysis as well as its relevance for a more realistic
spatially extended stochastic model. A concluding discussion around
the themes of the paper is found in \S\ref{sec:discussion}.


\section{Models}
\label{sec:models}

Both Hes1 and the Notch pathway are well-preserved pathways and
important during embryonic development \cite{Chen1997, DeLaPompa1997}.
As a fundamental pathway within neurogenesis, it has been extensively
analysed through experiments but a single description of a GRN within
one cell or even the small neighbourhood of its immediate surrounding
cells found from such experiments does not lead into insights into how
single cell interactions lead to population-level behaviour. This
motivates our interest in modelling this behaviour mathematically.

In this section we first describe the underlying biology of the
combined Hes1-Notch GRN in \S\ref{sec:word_model}. Next, we present an
ODE interpretation of this biological pathway on a population of cells
in \S\ref{sec:ode_model}. Finally, in \S\ref{sec:stoch_model} we set
up a stochastic model of the Hes1-Notch pathway, again on the cell
population level, but using the Reaction-Diffusion Master Equation
(RDME) framework.

\subsection{Hes1 Cell-to-Cell Signalling Process}
\label{sec:word_model}

For the modelling, we would ideally like to describe the pathway in a
way that captures its main behaviours while allowing us insight into
mechanistic interactions on a population level through mathematical
analysis and computational simulations. \review{The behaviour we want
  to capture is the behaviour shown in neural progenitor populations
  where Hes1 shows transient oscillations with a period of 2--3 hours
  \cite{Marinopoulou2021hes1neurons,Imayoshi2013,elAzhar2024hes1neurons}
  sometimes extending up to 4 hours
  \cite{elAzhar2024hes1neurons}. This is then followed by a fate
  decision into stationarity, with either high or low Hes1 protein
  levels leading to cells developing into glial cells or neurons,
  respectively \cite{Shimojo2008hes1notch}, which from biological
  considerations has to be rather robust to process and environment
  noise.}

\begin{figure}[t!] 
  \centering
  \includegraphics{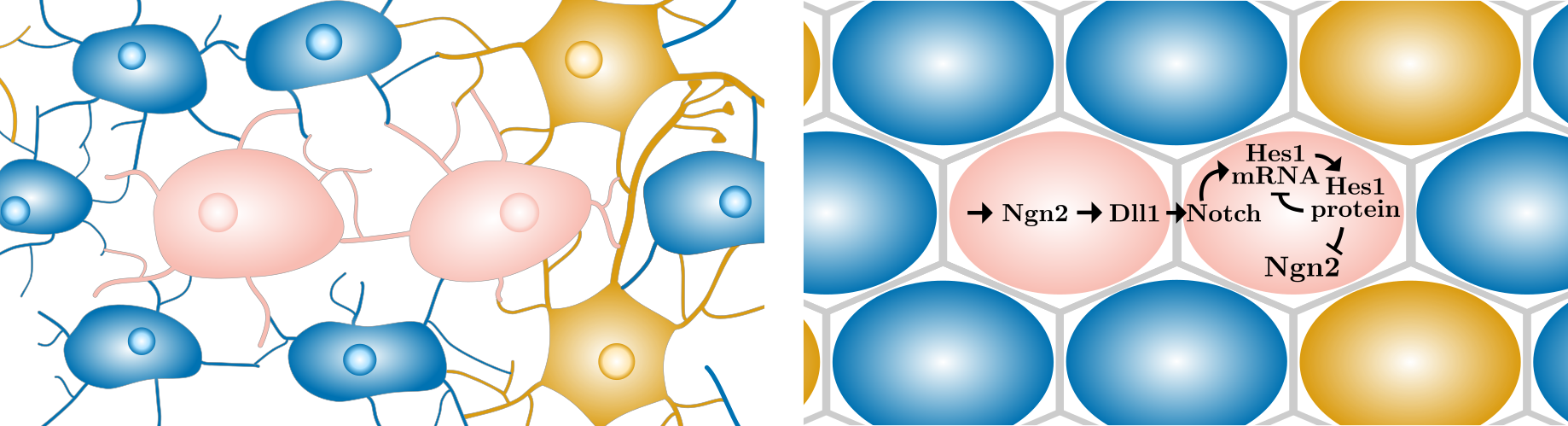}
  \caption{\textit{Left:} Representation of neurons (orange), glial
    cells (blue) and undifferentiated cells (pink) in a developing
    brain. \textit{Right:} Schematics of the Hes1 negative feedback
    loop in two neighbouring cells. The same interactions occur in
    every cell throughout the neural progenitor cell population
    between all neighbouring cells. All arrows ending with an
    arrowhead denote an activation or creation of a constituent while
    arrows ending with a vertical line denote a repression. All
    constituents are also degraded (not shown).}
  \label{fig:hes1_schema}
\end{figure}

We start with the schematic understanding of the underlying biological
processes depicted in \figref{hes1_schema}. Following
\cite{Shimojo2011DynamicCells}, the main molecules involved in the
process of maintaining neural progenitor cells are Neurogenin-2
(Ngn2), Delta-like-1 (Dll1), the Notch receptor as well as Hes1 mRNA
and protein which together interact as indicated in the figure. To
this end we use the notation
\begin{align}
  \label{eq:DOFs}
  D \text{: Dll1, } N \text{: Notch, } M \text{: Hes1 mRNA, }
  P \text{: Hes1 protein, } n \text{: Ngn2,}
\end{align}
for the concentrations of respective constituents in each cell. 

Starting in the left pink cell in \figref{hes1_schema}, Ngn2 is
constitutively produced and induces the production of Dll1, which in
turn is presented on the cell membrane and interacts with the Notch
receptor on the surface of the right cell. The internal part of the
Notch receptor, the Notch intracellular domain, then interacts with
the Hes1 gene promoter to induce the production of Hes1 mRNA which we
summarise here as Notch inducing Hes1 mRNA production. However, we are
aware that both Dll1 and Notch have a bound/inactive form as well as a
free/active form. Both proteins are transmembrane molecules and
signalling occurs via direct contact between the proteins
\cite{Bray2016notch}. This direct contact renders both proteins unable
to function after signalling which becomes relevant during the
mathematical modelling process.

Following the biological pathway further, Hes1 protein is then
produced from the Hes1 mRNA and successively inhibits the production
of new Hes1 mRNA while also repressing the production of the proneural
protein Ngn2. \review{Ultimately, this process causes all key
  components, including Dll1 and Ngn2, to oscillate before stabilising
  into a ``salt and pepper" pattern \cite{Shimojo2008hes1notch,
    Shimojo2016OscillatoryMorphogenesis, Kageyama2008notch}.  In this
  pattern, cells with high levels of Hes1 protein are surrounded by
  cells with low Hes1 levels while cells with low levels of Ngn2 and
  Dll1 tend to have high Hes1 protein levels, while those with high
  Ngn2 and Dll1 levels have low Hes1.}

\subsection{Network ODE Models}
\label{sec:ode_model}

Given the schematic understanding of \figref{hes1_schema}, we start by
proposing an ODE model to describe the Hes1-Notch GRN within a single
cell. In this case we describe Dll1, Notch, Hes1 mRNA, Hes1 protein
and Ngn2 as concentrations $[D, N, M, P, n]$, hence extending purely
Delta-Notch signalling systems such as
\cite{Collier1996PatternSignalling, Cohen2010, FormosaJordan2012} to
also include the Hes1 negative-feedback dynamics \review{similar to
  the model presented in \cite{Pfeuty2022generalhes}, but without
  including delay}.

We let all molecules be degraded at a rate $\mu_i$ with
$i \in \{D,N,M,P,n\}$ and capture the inhibition of Hes1 mRNA as well
as the repression of the production of the proneural protein Ngn2
using the repressor form of Hill functions of the Hes1 protein
\cite{Alon2006Hill}.  At the same time, the activation or production
of each constituent is modelled using $\alpha_i$ according to
individual dynamics of each molecule. These considerations lead to the
system describing the Hes1-Notch GRN in a single cell to be
\begin{align}
  \label{eq:ode}
  \left. \begin{array}{rcl}
           \dot{D} &=& \alpha_D n - \mu_D D, \\
           \dot{N} &=& \alpha_N \langle \Din \rangle - \mu_N N, \\
           \dot{M} &=& \frac{\alpha_M N}{1 + (P/K_M)^k} - \mu_M M,\\
           \dot{P} &=& \alpha_P M - \mu_P P, \\
           \dot{n} &=& \frac{\alpha_n}{1+(P/K_n)^h} - \mu_n n. \\
           \end{array} \right\}
\end{align}
Here, $\langle \Din \rangle := \sum_i w_i D_i$ is the average
time-dependent Dll1 signal a cell receives from its neighbouring cells
$i$ (always normalising the weights $w_i$ to sum to unity). To
determine the cell population behaviour, we apply this ODE system on
each individual node in a network which represents the connectivity
between a population of cells. In this paper we mainly use regular
hexagonal grids, however, other grids can easily be treated in the
same way.

We propose the parameters as given in \tabref{parameters}. Since both
the timings of the entire process with \review{oscillations of periods
  $\sim$ 2--3 hours \cite{Imayoshi2013, Marinopoulou2021hes1neurons}},
as well as most parameter values are available for mouse embryonal
cell lines, our overall calculations are based on these timings for
mouse development. For the degradation rates $\mu_i$ we rely on the
half-lifes for the associated components except for $D$ and $N$ which,
as previously mentioned in \S\ref{sec:word_model}, become inactive
upon contact made by signalling due to proteolytic cleavage of the
Notch receptor. Thus, we assume that $80\%$ of both proteins are used
while $20\%$ are free and can be degraded, \review{i.e., that the
  measured degradation rates are those of the $20\%$ free
  transmembrane proteins, thus, causing the actual degradation rates
  to increase fivefold,} cf.~\tabref{parameters}. One element deciding
system behaviour is the choice of the Hill coefficients $k$ and
$h$. We require both $k,h \in \mathbb{N}^+$ and choose $k = 1$ and
$h = 4$ as these are the minimum values which we have found are
necessary to realistically capture oscillations. \review{Similarly, we
  choose the Hill-function dissociation constants $K_M$ and $K_n$ to
  match the overall system behaviour, as these primarily influence the
  oscillation period and number of oscillations. Since the system is
  underdetermined, we do not account for perturbations in these
  values.} Given degradation rates and with fixed Hill functions, our
activation rates $\alpha_i$ follow by fitting to the relative amounts
of each component as found in \cite{Huang2023paxdb}. The uncertainty
of these activation rates are found by a straightforward Monte Carlo
approach, using the independent perturbations in \tabref{parameters}
and assuming $5\%$ noise for the concentrations. For more information
about this, see Appendix~\ref{app:parameters}. The resulting typical
dynamics of the model are shown in \figref{constituents_comparison}.


   
\begin{table}[t]
  \centering
  \begin{tabular}{ccc}
    \hline
    Parameter& Value (68\% Confidence Interval) & Reference \\
    \hline
    $\alpha_D$ & $0.018 \ (0.016, 0.021)$ [/min] & \textit{This paper} \\ 
    $\alpha_N$ & $6.0 \ (5.3, 6.7)$ [/min] & \\
    $\alpha_M$ & $0.017 \ (0.016, 0.019)$ [/min] & \\
    $\alpha_P$ & $0.14 \ (0.12, 0.16)$ [/min] & \\
    $\alpha_n$ & $0.0049 \ (0.0043, 0.0054)$ [$\mu M$/min] & \\
    \hline
    $\mu_D$ & $\log 2/50 \times 5 \ \log(2) / (45.3,55.2)\times 5$ [/min] & Dll1 half-life in mice  \cite{Shimojo2016OscillatoryMorphogenesis} \\
    $\mu_N $& $\log 2/40 \times 5 \ \log(2)/(36.2,44.2) \times 5$ [/min] & Notch1 half-life in humans \cite{Agrawal2009ComputationalSignaling} \\
    $\mu_M $ & $\log 2/24.1 \ \log 2 /(22.4, 25.8)$ [/min] & Hes1 protein half-life in mice \cite{Hirata2002OscillatoryLoop} \\
    $\mu_P$ & $\log 2/22.3 \ \log 2/(19.2,25.4)$ [/min] & Hes1 mRNA half-life in mice \cite{Hirata2002OscillatoryLoop} \\
    $\mu_n$ & $\log 2/21.9 \ \log 2/(19.7, 24.1)$ [/min] & Ngn2 half-life in Xenopus \cite{Vosper2007RegulationProteolysis} \\
    \hline
    $K_M$ & $\equiv 0.050$ [$\mu M$] & \textit{This paper} \\
    $K_n$ & $\equiv 0.030$ [$\mu M$] & \\
    \hline
    $k$ & $\equiv 1$& \textit{This paper} \\
    $h$ & $\equiv 4$& \\
    \hline
  \end{tabular}
  \caption{Parameters for \eqref{eq:ode}. Values of $\alpha_i$ are
    chosen to give the desired behaviour of constituents relative to
    each other \cite{Huang2023paxdb}, while $\mu_i$ values are based
    on the half-lives of the components of the GRN, mostly in
    mice. For $D$ and $N$, specifically, we make the modelling
    assumption that $80\%$ of each is bound, thus, leading to the
    multiplication by the factor $5$. Where values are available with
    error estimates we use those, while for $\alpha_i$ we fit them to
    all other perturbed parameters, and for $\mu_D$ and $\mu_N$ we
    assume an ad hoc $\pm 10\%$ uncertainty since the value for
    $\mu_N$ is more uncertain (a higher range of values including an
    NICD half-life of $\sim 180$ min
    \cite{Ilagan2011Real-timeReporter} has been reported). To achieve
    the behaviour we desire, the value from
    \cite{Agrawal2009ComputationalSignaling} was used for this
    parameter.}
  \label{tab:parameters}
\end{table}

\begin{figure}[t]
  \centering
  \includegraphics
  {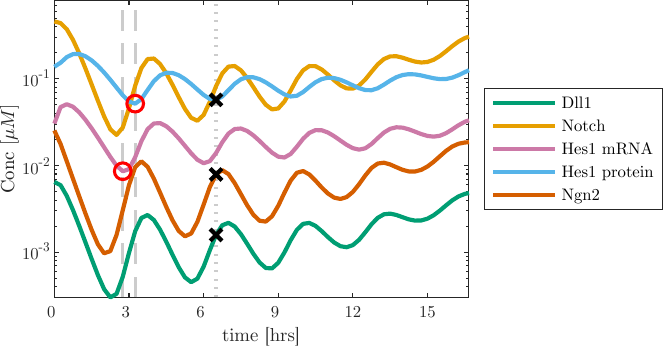}
  \caption{Dynamics averaged over all cells on a $20$-by-$20$ grid of
    hexagonal cells when starting from random initial data. The two
    dashed vertical lines indicate the offset between Hes1 mRNA and
    Hes1 protein expression levels which has been shown previously in
    \cite{Hirata2002OscillatoryLoop}. The offset between the Hes1 mRNA
    and Hes1 protein oscillations between the two markers as shown is
    approximately $30$ minutes. The vertical dotted line shows that we
    approximately capture the inverse oscillations between the Hes1
    protein, and Dll1 and Ngn2 \cite{Shimojo2011DynamicCells}.}
  \label{fig:constituents_comparison}
\end{figure}

For improved ability to analyse the system, we assume quasi-steady
states for three of the five states to find a reduced ODE system.
Depending on the reduction we choose, we find either equations of type
1,
\begin{align}
  \label{eq:red_type1}
  \left. \begin{array}{rclcl}
    \dot{x} &=& \frac{\langle \yin \rangle}{a+x^k} - x &=:& \langle \yin \rangle f(x) - x\\
    \dot{y} &=& v \left( \frac{1}{1 + b x^h} - y \right) &=:& v \left( g(x) - y \right)
  \end{array} \right\},
\end{align}
or of types 2 and 3, respectively,
\begin{align}
  \label{eq:red_type2}
  \left. \begin{array}{rcl}
           \dot{x} &=& y f(x) - x \\
           \dot{y} &=& v \left( \langle g(\xin) \rangle - y \right)
         \end{array} \right\}, \qquad
  \left. \begin{array}{rcl}
          \dot{x} &=& \langle g(\yin) \rangle f(y) - x \\
          \dot{y} &=& v \left( x - y \right)
        \end{array} \right\}, 
\end{align}
where $f$ and $g$ are as for type 1 and where
$\langle g(\xin) \rangle = \sum_i w_i g(x_i)$ is the average of
$g(x_i)$ across the neighbour cells $i$. Overall, there are $10$
possible ways to reduce the original system \eqref{eq:ode} to a
two-dimensional system by making quasi-steady state
assumptions. However, three possible options, \review{those where
  neither of $x$ and $y$ corresponds to $M$ or $P$}, are not readily
reducible since the reduction involves solving Hill equations. This
leaves seven possible alternatives (four of type 1, two of type 2 and
one of type 3) capturing the steady state behaviour of the original
system \eqref{eq:ode}. The different alternatives are summarised in
\tabref{red_alternatives}, and a typical derivation can be found in
Appendix~\ref{app:red}.  For comparison, the behaviour of both the
full model \eqref{eq:ode} and the best fit reduced model
\eqref{eq:red_type1} are shown in \figref{ode_panel}. To note about
the reduced models in \eqref{eq:red_type1} and \eqref{eq:red_type2} is
that they all end up with the same parameters $a$ and $b$,
cf.~\tabref{red_alternatives}, while $v$ varies such that all three
reduced model types behave similarly except for the timing of fate
decision which is determined by $v$.

To further simplify analysis, at points we use a scalar version of our
model. To reach this, we make the further assumption that
$\dot{y} = 0$ in either of the two-dimensional models
\eqref{eq:red_type1}--\eqref{eq:red_type2}. This reduces all three
types into
\begin{equation}
  \label{eq:red_type4}
  \dot{x} = \langle g(\xin) \rangle f(x)-x.
\end{equation}

Our reduced models \eqref{eq:red_type1}--\eqref{eq:red_type2} are
remindful of the Delta-Notch model from
\cite{Collier1996PatternSignalling},
\begin{align}
  \label{eq:collier}
  \left. \begin{array}{rclcl}
    \dot{x} &=& \frac{\langle y_{\text{in}} \rangle ^k}{a + \langle y_{\text{in}} \rangle^k} - x  &=:& F(\langle y_{\text{in}} \rangle ) - x\\
    \dot{y} &=& v \left( \frac{1}{1 + b x^h} - y \right) &=:& v \left( G(x) - y \right)
  \end{array} \right\},
\end{align}
where $x$ describes Notch, $y$ describes Delta, and
$\langle \yin \rangle$ is the average incoming Delta from the
neighbours on the grid.
While our models \eqref{eq:red_type1} and \eqref{eq:red_type2} show
differences in the form of $f(x)$, the order of averaging and Hill
functions, the values of the Hill coefficients $k$ and $h$, as well as
where the model links the incoming signal compared to the Collier
model \eqref{eq:collier}, we can use an analysis similar to the one
proposed in \cite{Collier1996PatternSignalling} to investigate the
behaviour of our system further.

\begin{SCtable}[2][!b]
  \centering
  \begin{tabular}{cccc}
    \hline
    type & $x$ & $y$  & $v$\\
    \hline
    \textbf{1}  & $\mathbf{M}$ & $\mathbf{n}$ & $\mathbf{1.096 \ (0.975, 1.280)}$\\ 
    1 & $P$ & $n$ & $1.014 \ (0.851, 1.100)$\\ 
    1 & $M$ & $D$ & $2.410 \ (2.111, 2.740)$ \\ 
    1 & $P$ & $D$ & $2.230 \ (1.800, 2.627)$ \\ 
    \hline
    2 & $M$ & $N$ & $3.013 \ (2.712, 3.401)$ \\ 
    2 & $P$ & $N$ & $2.788 \ (2.298, 3.280)$ \\ 
    \hline
    3 & $M$ & $P$ & $1.081 \ (0.936, 1.305)$\\ 
    \hline
  \end{tabular}
  \caption{The seven alternative ways to reduce the original system
    \eqref{eq:ode} to \eqref{eq:red_type1} or \eqref{eq:red_type2} via
    quasi-steady state assumptions and the resulting effective
    parameter $v$. The parameters $a$ and $b$ are
    $0.083 \ (0.071, 0.094)$ and
    $1.652 \times 10^5 \ (0.807, 3.217) \times 10^5$ for all
    alternatives ($68\%$ confidence intervals). The version displayed
    in \figref{ode_panel} is indicated in bold.}
  \label{tab:red_alternatives}
\end{SCtable}

\begin{figure}[!t]
  \centering
  \includegraphics{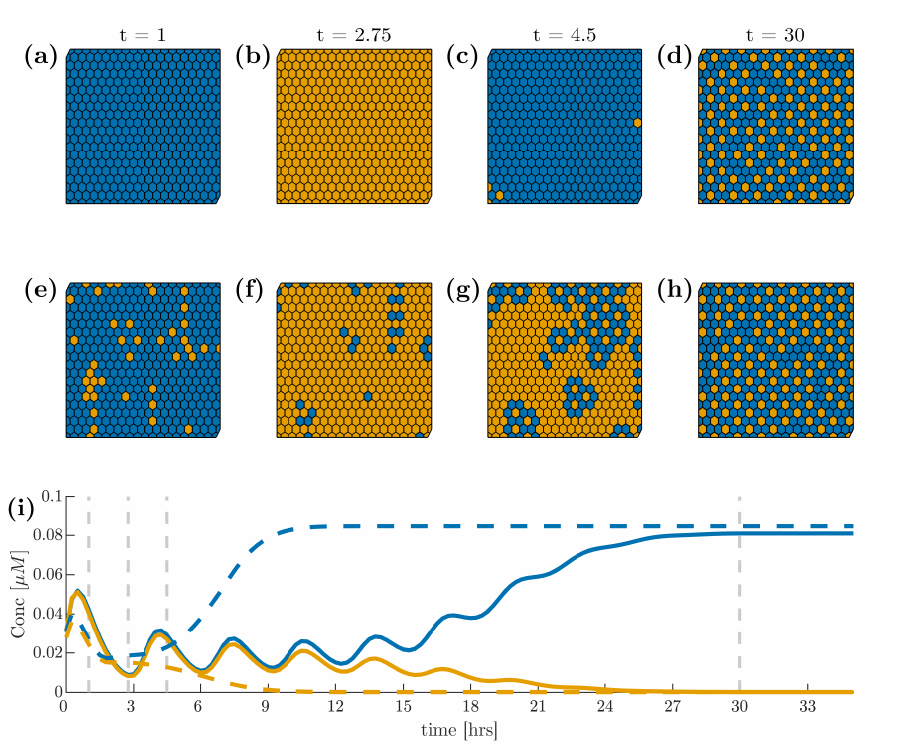}
  \caption{\review{\textit{(a)--(d):}} spatial dynamics of Hes1 mRNA
    in our proposed grid ODE model \eqref{eq:ode} where blue cells are
    above the mean concentration before fate decision and orange cells
    are below this threshold. \review{\textit{(e)--(h):}} Hes1 mRNA in
    the reduced model \eqref{eq:red_type1} on the same
    grid. \review{\textit{(i):}} the average Hes1 mRNA (solid line:
    full ODE model; dashed line: reduced model) over time
    \review{calculated separately over all cells which show high or
      low Hes1 concentrations after fate decision} with blue and
    orange denoting high and low expression, respectively. The
    vertical lines denote the times at which the spatial dynamics are
    shown in the top and middle rows. \review{All simulations shown
      are on a $20 \times 20$ grid with zero boundary
      conditions. Initial conditions are uniform random values scaled
      to the required concentrations as given in
      Appendix~\ref{app:parameters}.} Note that by our
    parameterisation, we find our results in concentrations.}
  \label{fig:ode_panel}
\end{figure}

\subsection{Spatial Stochastic Reaction-transport Model}
\label{sec:stoch_model}

To take intra-cellular noise into account we also consider a
mesoscopic stochastic version of the grid ODE \eqref{eq:ode} as
follows. We represent the individual cells as nodes in a network with
connectivity given by an underlying mesh discretisation. Consider a
single cell first, with time-dependent state vector
$X(t) \in \Intdom_{+}^{d}$ counting at time $t$ the number of
constituents (or species) in each of $d$ compartments. We may
generally prescribe $R$ Markovian reactions in the form of Poissonian
state transitions $X \mapsto X+\stoich_{r}$ by
\begin{align}
  \label{eq:prop}
  \Prob\left[X(t+dt) = x+\stoich_{r}| \; X(t) = x\right] &=
                                                           w_{r}(x) \, dt+o(dt),
\end{align}
for $r = 1\ldots R$ with $w_r(x)$ the $r$th transition intensity (or
propensity), and $\stoich \in \Intdom^{d \times R}$ the stoichiometric
matrix. The evolution of the $i$th species can then be described by
the Poisson representation \cite{Markovappr}
\begin{align}
  \label{eq:Poissrepr}
  X_i(t) &= X_i(0)+\sum_{r = 1}^{R} \stoich_{ri} \Pi_{r}
           \left(  \int_{0}^{t} w_{r}(X(s)) \, ds \right),
\end{align}
with unit-rate and independent Poisson processes
$(\Pi_{r})_{r = 1}^{R}$.

\begin{SCfigure}
  \includegraphics{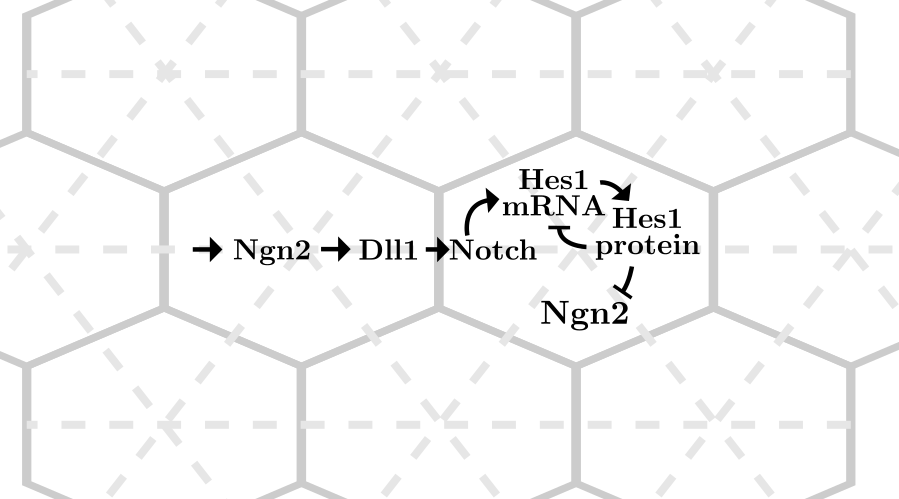}
  \caption{The schematics as implemented on a hexagonal grid (solid
    lines) using the RDME model.  The dashed lines show the
    triangulation on which the hexagonal grid is built in the URDME
    framework.}
  \label{fig:triangulation}
\end{SCfigure}
  
In the present case we identify the following reactions:
\begin{align}
  \label{eq:RDME1}
  \left. \begin{array}{rcl}
           n  & \xrightarrow{ \alpha_Dn } &  n+D	\\
           N  & \xrightarrow{ \alpha_M N/(1+(P/(K_M \VoxVol))^k) } &  N+M	\\
           M  & \xrightarrow{ \alpha_P M } &  M+P	\\
           \emptyset  & \xrightarrow{ \alpha_n\VoxVol/(1+(P/(K_n\VoxVol))^h) } &  n
         \end{array} \right\} \qquad
  \left. \begin{array}{rcl}
           D  & \xrightarrow{ \mu_D D } &  \emptyset	\\
           N  & \xrightarrow{ \mu_N N } &  \emptyset	\\
           M  & \xrightarrow{ \mu_M M } &  \emptyset	\\
           P  & \xrightarrow{ \mu_P P } &  \emptyset	\\
           n  & \xrightarrow{ \mu_n n } &  \emptyset
         \end{array} \right\}
\end{align}
where $\VoxVol$ is the volume of each voxel. The production of Notch,
as initiated by the Dll1 signal, is yet to be described.

We next consider a population of cells in $K$ nodes or voxels
$(\VoxVol_k)_{k = 1}^K$ and a time-dependent state
$X \in \Intdom_{+}^{d \times K}$, with $X_{ik}(t)$ the number of
constituents of the $i$th species in the $k$th voxel. The general
dynamics \eqref{eq:Poissrepr} now becomes
\begin{align}
  \label{eq:RDMEPoissrepr}
  X_{ik}(t) = X_{ik}(0) &+
  \sum_{r = 1}^{R} \stoich_{ri} \Pi_{rk}  
  \left( \int_{0}^{t} \VoxVol_k u_{r}(\VoxVol_k^{-1}X_{\cdot ,k}(s)) \, ds \right) \\
  \nonumber
  &-\sum_{k = 1}^{J} \Pi_{ijkl}' 
  \left( \int_{0}^{t} q_{ijkl}X_{ik}(s) \, ds \right)
  +\sum_{k = 1}^{J} \Pi_{jilk}'
  \left( \int_{0}^{t} q_{jilk}X_{jl}(s) \, ds \right),
\end{align}
where $q_{ijkl}$ is the rate per unit of time for species $i$ in the
$k$th voxel to transfer into species $j$ in the $l$th voxel, and where
$(\Pi_{\cdot},\Pi'_{\cdot})$ is an appropriately extended set of
independent unit-rate Poisson processes. This general linear transfer
process is not standard as it allows for species to change their type
while transporting, but it is appropriate here since it is exactly
this effect we are interested in. Note also that in
\eqref{eq:RDMEPoissrepr}, the propensities $(u_r)$ are independent of
the voxel volume $\VoxVol_k$. Using this formalism we may augment
\eqref{eq:RDME1} with
\begin{align}
  \label{eq:RDME2}
  \left. \begin{array}{rrl}
           D  & \xrightarrow{ \alpha_N D }&  D+D^{\text{in}} \\
           D_k^{\text{in}} & \xrightarrow{ \alpha_N q_{kl} D_k^{\text{in}}}&  N_l
         \end{array} \right\}
\end{align}
that is, a Dll1 signal in voxel $k$ sequentially transforms into a
diffusing pseudo species $D^{\text{in}}$, which then diffuses
\emph{into} a Notch signal in voxel $l$ at rate $\alpha_N q_{kl}$,
where $q_{kl}$ is the proportion of Dll1 used for the signal between
these two voxels (for example, $q_{kl} \equiv 1/6$ on a hexagonal mesh
with $k$ and $l$ neighbouring voxels).

The model so described can readily be implemented across a given
triangulation of space using URDME \cite{URDMEpaper} and simulated
using the supported NSM-solver with a triangulation as illustrated in
\figref{triangulation}. Sample simulations are reported in
\figref{urdme_panel}.

\begin{figure}[H]
  \centering
  \includegraphics{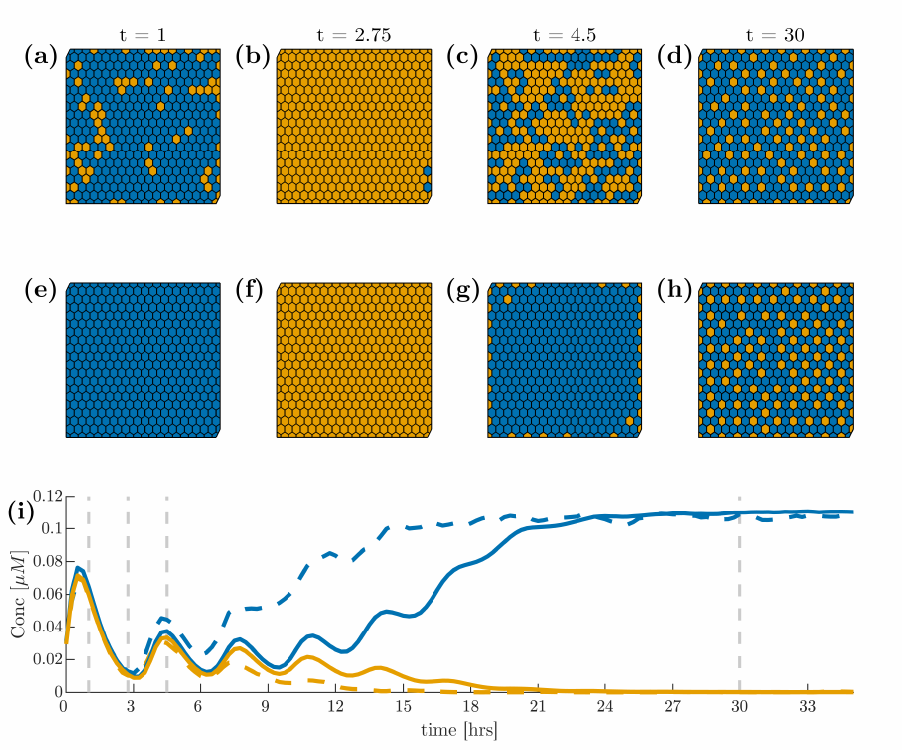}
  \caption{\review{\textit{(a)--(d):}} spatial dynamics of Hes1 mRNA
    in our RDME model \eqref{eq:RDME1} and \eqref{eq:RDME2} choosing
    the volume of each voxel to be $1 \mu m^3$, representing a rather
    high noise levels, and using the same colour scheme as in
    \figref{ode_panel}. \review{\textit{(e)--(h):}} Hes1 mRNA in the
    RDME model with voxel volume $50 \mu m^3$, i.e., less levels of
    noise. \review{\textit{(i):}} the average Hes1 mRNA at low volume,
    $1 \mu m^3$, (dashed line) and high volume, $50 \mu m^3$, (solid
    line) over time where the horizontal lines denote the times at
    which the spatial dynamics are shown in the top and middle
    rows. Blue and orange, again, denote cells with high and low
    expression \review{and boundary and initial conditions are chosen
      as previously only this time initial conditions are in number of
      molecules}.  Based on a mouse embryonal stem cell volume of
    approximately $50 \mu m^3$ (size based on \cite{Pillarisetti2009,
      Wang2011} assuming spherical cells) and a mean number of $8104$
    molecules per cell \cite{Ho2018abundance}, we find our results in
    $\mu M$.}
\label{fig:urdme_panel}
\end{figure}


\section{Analysis and Results}
\label{sec:analysis}

We next analyse the properties of the system \eqref{eq:ode}. Existence
and qualitative behaviour of fate decision in a two-cell 1D periodic
system in the reduced model \eqref{eq:red_type1}--\eqref{eq:red_type4}
is investigated in \S\ref{sec:steady_2} and in the full model
\eqref{eq:ode} in \S\ref{sec:steady_5}. We then examine the behaviour
of the system \eqref{eq:red_type1} on a regular hexagonal grid in
\S\ref{sec:spatial_periodic}, and we finally quantitatively compare
the patterning differences between the ODE \eqref{eq:ode} and RDME
models \eqref{eq:RDME1}--\eqref{eq:RDME2} in \S\ref{sec:ode_rdme}.

\subsection{The Reduced Stationary Solutions}
\label{sec:steady_2}

At stationary solutions to \eqref{eq:ode}, the quasi-stationary
arguments used to arrive at the reduced systems
\eqref{eq:red_type1}--\eqref{eq:red_type4} are valid and so we target
these models initially. We first consider the homogeneous steady state
where, by ``homogeneous'' we simply mean that all cells have identical
states. We pick the scalar reduced model \eqref{eq:red_type4}, i.e.,
\begin{align}
  \label{eq:ND0}
    \dot{x} &= g(\langle \xin \rangle) f(x)-x,
\end{align}
where $f, g$ are as in \eqref{eq:red_type1}. Looking for a homogeneous
steady state where $\langle \xin \rangle = x$, we define
$\varphi(x) := g(x)f(x)$ and equivalently search for fixed points
satisfying $\varphi(x) = x$. Since $0 < \varphi(0)$, $\varphi(1) < 1$,
and since $f$, $g$, and, hence, also $\varphi$ are all decreasing
functions there is a unique root $\bar{x}_0$ in $(0,1)$,
cf.~\figref{fixpnt}. In conclusion,
\begin{proposition}
  \label{prop:hom_exists}
  There is a unique stationary point $\bar{x}_0 \in (0,1)$ for the
  homogeneous problem \eqref{eq:ND0}. By extension this unique
  solution also applies to the homogeneous version of the full system
  \eqref{eq:ode}.
\end{proposition}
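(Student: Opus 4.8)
The plan is to collapse the homogeneous stationarity condition to a scalar fixed-point equation and then settle existence and uniqueness separately, via the intermediate value theorem and a monotonicity argument, exactly as foreshadowed in the text. First I would impose $\dot{x}=0$ in \eqref{eq:ND0}; at a homogeneous state every cell carries the same value, so $\langle \xin \rangle = x$ and the equation reduces to $\varphi(x)=x$ with $\varphi(x):=g(x)f(x)$. Introducing $\psi(x):=\varphi(x)-x$, the whole task becomes showing that $\psi$ has exactly one zero in $(0,1)$.

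For existence I would evaluate $\psi$ at the endpoints. Since $f(0)=1/a>0$ and $g(0)=1$, we have $\psi(0)=\varphi(0)=f(0)g(0)>0$; and since $(a+1)(1+b)>1$ for $a,b>0$, we have $\varphi(1)<1$ and hence $\psi(1)<0$. As $\psi$ is continuous on $[0,1]$, the intermediate value theorem delivers at least one root $\bar{x}_0\in(0,1)$. Uniqueness then follows from strict monotonicity: both $f$ and $g$ are positive and strictly decreasing on $(0,1)$, so their product $\varphi$ is strictly decreasing there (most cleanly from $\varphi'=f'g+fg'<0$, each summand being nonpositive and $fg'<0$ for $x>0$). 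Consequently $\psi=\varphi-x$ is a strictly decreasing function, so it can vanish at most once, which together with the existence step pins down a unique $\bar{x}_0$, as illustrated in \figref{fixpnt}.

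Finally, for the extension to the full system \eqref{eq:ode}, I would argue that the homogeneous stationarity conditions there form an invertible chain. Setting each right-hand side to zero and using $\langle \Din \rangle = D$ at homogeneity expresses $D,N,M,P,n$ successively as strictly monotone, hence bijective, functions of one another, and composing these substitutions closes into precisely the scalar equation $\varphi(x)=x$ for the appropriate variable. Because each link of the chain is a bijection on the relevant positive range, the unique scalar root lifts to a unique homogeneous steady state of the full five-dimensional system, and no steady state is lost or gained in the reduction.

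I expect this last step to be the only place demanding genuine care: one must check that the composed quasi-steady-state substitutions reproduce exactly the same $\varphi$ (so that the reduction is faithful at stationarity, as asserted in the text) and that the invertibility holds on the whole admissible positive range, with no spurious roots introduced outside $(0,1)$. By contrast, the scalar existence-and-uniqueness argument is essentially the sign-and-slope picture already sketched before the statement and should require only the elementary estimates above.
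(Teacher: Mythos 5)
Your proposal is correct and follows essentially the same route as the paper: the authors likewise reduce to the fixed-point equation $\varphi(x)=x$ with $\varphi=gf$, note $\varphi(0)>0$, $\varphi(1)<1$, and invoke monotonicity of $f$, $g$, and hence $\varphi$ to get a unique root in $(0,1)$. Your added elaboration of the extension to the full system (chaining the stationarity relations into a single decreasing scalar fixed-point problem, consistent with the rescaling $P_0\propto\bar{x}_0$ in \eqref{eq:red3}) is a faithful filling-in of a step the paper only asserts.
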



Since we want to show that our system undergoes fate decision into a
\emph{non-homogeneous} solution, we next investigate the stability
properties of the homogeneous steady state in the simplest
one-dimensional setting consisting of two cells with a
periodic boundary condition.

\begin{proposition}
  \label{prop:hom_unstable1D}
  The homogeneous stationary solution in the reduced system
  \eqref{eq:red_type1} is unstable in a system with two cells under a
  periodic boundary condition if and only if
  \begin{align}
    \label{eq:cond0}
    f(\bar{x}_0)g'(\bar{x}_0)-f'(\bar{x}_0)g(\bar{x}_0) &< -1,
  \end{align}
  for $\bar{x}_0$ the homogeneous stationary solution.
\end{proposition}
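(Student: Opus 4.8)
The plan is to write out the two-cell periodic system explicitly, linearise about the homogeneous steady state, and exploit the cell-swap symmetry to reduce the $4\times4$ stability problem to two decoupled $2\times2$ blocks. In a one-dimensional ring of two cells each cell sees only the other as its neighbour, so $\langle \yin \rangle$ for one cell is just the partner cell's $y$-value, and the type-1 system \eqref{eq:red_type1} becomes
\begin{align*}
  \dot{x}_1 &= y_2 f(x_1) - x_1, & \dot{y}_1 &= v(g(x_1)-y_1),\\
  \dot{x}_2 &= y_1 f(x_2) - x_2, & \dot{y}_2 &= v(g(x_2)-y_2).
\end{align*}
The homogeneous steady state is $x_1=x_2=\bar{x}_0$, $y_1=y_2=\bar{y}_0$ with $\bar{y}_0=g(\bar{x}_0)$ and $\bar{x}_0$ the unique root furnished by Proposition~\ref{prop:hom_exists}.

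Next I would linearise. Writing $A:=\bar{y}_0 f'(\bar{x}_0)-1$, $B:=f(\bar{x}_0)$, and $C:=v\,g'(\bar{x}_0)$, the $x$-equation of each cell couples to the \emph{other} cell's $y$. The system is invariant under the involution interchanging the two cells, so I would pass to sum/difference coordinates $(\delta x_1\pm\delta x_2,\,\delta y_1\pm\delta y_2)$, which block-diagonalises the Jacobian into
\[
  M_\pm=\begin{pmatrix} A & \pm B\\ C & -v\end{pmatrix},
\]
the sign flip $\pm B$ arising because the difference mode reverses the neighbour coupling. The symmetric block $M_+$ governs perturbations preserving homogeneity, while the antisymmetric block $M_-$ governs the onset of patterning.

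The core of the argument is then a sign analysis of the two blocks. Since $f$ and $g$ are decreasing, $f'(\bar{x}_0)<0$ and $g'(\bar{x}_0)<0$, hence $A<-1<0$ and $C<0$, while $B>0$ and $v>0$. For $M_+$ one checks $\operatorname{tr}M_+=A-v<0$ and, using these signs, $\det M_+=-Av-BC>0$, so the symmetric mode is \emph{always} stable: homogeneous perturbations can never destabilise the state. For $M_-$ we likewise have $\operatorname{tr}M_-=A-v<0$, so by the trace–determinant criterion the only route to instability is $\det M_-<0$.

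It then remains to compute $\det M_-=-Av+BC=v\bigl(1-\bar{y}_0 f'(\bar{x}_0)+f(\bar{x}_0)g'(\bar{x}_0)\bigr)$ and observe that, since $v>0$, this is negative precisely when $f(\bar{x}_0)g'(\bar{x}_0)-\bar{y}_0 f'(\bar{x}_0)<-1$; substituting $\bar{y}_0=g(\bar{x}_0)$ yields exactly \eqref{eq:cond0}, with the factor $v$ cancelling (consistent with $v$ only setting the timing of fate decision). The principal obstacle is purely bookkeeping: getting the off-diagonal coupling right so that the difference mode carries the $-B$ entry, and confirming that $M_+$ is unconditionally stable so that the stated inequality is genuinely both necessary and sufficient for instability of the homogeneous solution.
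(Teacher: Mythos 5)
Your proof is correct, and it takes a somewhat different---and in one respect more faithful---route than the paper's. The paper proves the proposition by passing first to the scalar quasi-steady-state model \eqref{eq:red_type4}, so its two-cell system is two-dimensional ($\dot{x}_1 = g(x_2)f(x_1)-x_1$, $\dot{x}_2 = g(x_1)f(x_2)-x_2$); the sum/difference change of variables then yields a single scalar growth rate $f'g - fg' - 1$ for the difference mode, and \eqref{eq:cond0} follows immediately. You instead retain the full two-component system \eqref{eq:red_type1} in each cell, so the linearisation is $4\times 4$, and the same cell-swap symmetry gives the two $2\times 2$ blocks $M_\pm$ you describe; your sign bookkeeping ($A<-1$, $B>0$, $C<0$, hence $\det M_+>0$ with $\operatorname{tr}M_+<0$) and the computation $\det M_- = v\bigl(1-g(\bar{x}_0)f'(\bar{x}_0)+f(\bar{x}_0)g'(\bar{x}_0)\bigr)$ are both correct, and the trace--determinant criterion correctly converts $\det M_-<0$ into \eqref{eq:cond0}. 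What your version buys is twofold: it proves the statement for the system actually named in the proposition rather than for its further scalar reduction, and it makes explicit two facts the paper's argument cannot see---that the homogeneity-preserving mode is unconditionally stable, and that the relaxation rate $v$ cancels from the instability threshold (in the paper's route $v$ has already been eliminated by the quasi-steady-state assumption before the analysis begins). The paper's approach is shorter and is later leveraged uniformly for the $N$-cell Fourier analysis in \S\ref{sec:spatial_periodic}, but for this two-cell proposition your argument is the more complete one.
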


\begin{proof}
  The two-cell periodic system reads
  \begin{equation}
    \label{eq:twocell}
    \left. \begin{aligned}
        \dot{x_1} &= g(x_2) f(x_1) - x_1 \\
        \dot{x_2} &= g(x_1) f(x_2) - x_2
      \end{aligned} \right\}.
  \end{equation}
  We assume small perturbations about the homogeneous steady state and
  introduce the change of variables
  \begin{equation}
    \label{eq:changeofvariables}
    \begin{aligned}
      \sigma &= \frac{x_1 + x_2}{2}, \quad &\delta &= \frac{x_1 - x_2}{2},
    \end{aligned}
  \end{equation}
  where we consider the perturbation $\delta$ small. Expanding the
  system around the homogeneous stationary solution, the equations
  decouple and we find the governing equation
  \begin{equation}
    \label{eq:twocell_linearised}
    \dot{\delta} = \left(
      f'(\sigma)g(\sigma)-f(\sigma)g'(\sigma) -1 \right) \delta.
  \end{equation}
  Letting $\sigma = \bar{x}_0$ we obtain condition \eqref{eq:cond0}.
\end{proof}

This result holds for parameter $(a,b) > 0$ which holds for all
reductions to any of the reduced systems. That the previous result
remains true for the full system \eqref{eq:ode} is more involved to
show so we defer this to the next section \S\ref{sec:steady_5}.

\begin{figure}[t]
  \centering
  \includegraphics{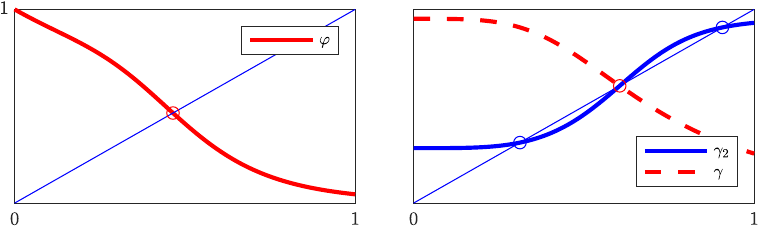}
  \caption{Fix point arguments. \textit{Left:} the unique homogeneous
    stationary state is the fix point
    $\bar{x}_0 = \varphi(\bar{x}_0)$. \textit{Right:} if
    $\gamma_2'(\bar{x}_0) > 1$, then there are cyclic
    (non-homogeneous) solutions $\bar{x}_1 < \bar{x}_0 < \bar{x}_2$.}
  \label{fig:fixpnt}
\end{figure}

We next consider the existence of a non-homogeneous steady state. We
again assume a 2-cell periodic set up and thus look for stationary
solutions to \eqref{eq:twocell}.

\begin{proposition}
  \label{prop:non-hom_exists}
  Under condition \eqref{eq:cond0} there exists a non-homogeneous
  stationary state for the 2-cell periodic problem \eqref{eq:twocell}.
  By extension this solution also applies to the corresponding
  periodic version of \eqref{eq:ode}.
\end{proposition}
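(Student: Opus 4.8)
The plan is to recast stationarity of the two-cell system \eqref{eq:twocell} as a fixed-point problem for a one-dimensional map, and then to detect an extra fixed point using condition \eqref{eq:cond0}. A steady state of \eqref{eq:twocell} requires $g(x_2)f(x_1)=x_1$ and $g(x_1)f(x_2)=x_2$. Since $f(x)=1/(a+x^k)>0$, I would introduce $\psi(x):=x/f(x)=x(a+x^k)$, a strictly increasing continuous bijection of $[0,\infty)$. The stationary equations then read $x_1=\psi^{-1}(g(x_2))$ and $x_2=\psi^{-1}(g(x_1))$, so setting $\gamma:=\psi^{-1}\circ g$ — a strictly decreasing, bounded, continuous self-map of $[0,\infty)$ whose \emph{unique} fixed point is the homogeneous solution $\bar x_0$ of Proposition~\ref{prop:hom_exists} — a non-homogeneous steady state is exactly a genuine $2$-cycle $x_1=\gamma(x_2)$, $x_2=\gamma(x_1)$ with $x_1\neq x_2$, i.e. a fixed point of $\gamma_2:=\gamma\circ\gamma$ distinct from $\bar x_0$.

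Next I would translate \eqref{eq:cond0} into the local transversality statement $\gamma_2'(\bar x_0)>1$ quoted in \figref{fixpnt}. Differentiating the identity $\psi(\gamma(x))=g(x)$ at $\bar x_0$ and using $\bar x_0=g(\bar x_0)f(\bar x_0)$ gives $\gamma'(\bar x_0)=\dfrac{f(\bar x_0)g'(\bar x_0)}{1-g(\bar x_0)f'(\bar x_0)}$. Because $f$ and $g$ are decreasing, the denominator exceeds $1$ and the numerator is negative, so $\gamma'(\bar x_0)<0$; clearing the positive denominator shows that $\gamma'(\bar x_0)<-1$ is precisely \eqref{eq:cond0}. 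Since $\gamma(\bar x_0)=\bar x_0$, the chain rule yields $\gamma_2'(\bar x_0)=\gamma'(\bar x_0)^2>1$. I expect this derivative identity to be the main obstacle, as it requires careful implicit differentiation of $\psi^{-1}$ together with the defining relation for $\bar x_0$.

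With $\gamma_2'(\bar x_0)>1$ established, existence follows from an intermediate-value argument for the increasing map $\gamma_2$. Set $x_{\max}:=\gamma(0)=\psi^{-1}(1)$ and $x_{\min}:=\gamma(x_{\max})=\gamma_2(0)$; since $\gamma$ is decreasing and bounded above by $x_{\max}$, the interval $J:=[x_{\min},x_{\max}]$ is invariant under $\gamma$, hence under $\gamma_2$, and one checks $x_{\min}<\bar x_0<x_{\max}$. Writing $h(x):=\gamma_2(x)-x$, invariance forces $h(x_{\min})\ge 0$ and $h(x_{\max})\le 0$, while $h(\bar x_0)=0$ with $h'(\bar x_0)=\gamma_2'(\bar x_0)-1>0$, so $h<0$ just to the left of $\bar x_0$. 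The intermediate value theorem on $[x_{\min},\bar x_0]$ then produces a root $\bar x_1<\bar x_0$, and $\bar x_2:=\gamma(\bar x_1)>\bar x_0$ closes a genuine $2$-cycle $(\bar x_1,\bar x_2)$ with $\bar x_1<\bar x_0<\bar x_2$, exactly as depicted in \figref{fixpnt}; this is the desired non-homogeneous stationary state of \eqref{eq:twocell}. Finally, the extension to the periodic version of \eqref{eq:ode} follows as noted at the start of \S\ref{sec:steady_2}: at any stationary state the quasi-steady-state reductions leading to \eqref{eq:red_type1}--\eqref{eq:red_type4} are exact, so the solution lifts to the full five-component system.
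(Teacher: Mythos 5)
Your proposal is correct and follows essentially the same route as the paper: you recast stationarity as the fixed-point problem $x=\gamma_2(x)$ for $\gamma=\psi^{-1}\circ g$ (the paper's $h^{-1}\circ g$), compute $\gamma'(\bar{x}_0)=f(\bar{x}_0)g'(\bar{x}_0)/\bigl(1-g(\bar{x}_0)f'(\bar{x}_0)\bigr)$ by implicit differentiation, and identify \eqref{eq:cond0} with $\gamma'(\bar{x}_0)<-1$, i.e.\ $\gamma_2'(\bar{x}_0)>1$. Your intermediate-value argument on the invariant interval $[\gamma_2(0),\gamma(0)]$ is just a slightly more explicit version of the paper's appeal to $\gamma_2(0)>0$, $\gamma_2(1)<1$ and the transversal crossing at $\bar{x}_0$.
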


\begin{proof}
  From \eqref{eq:twocell} we have the stationary relation
  \begin{equation*}
    h(x_1) := \frac{x_1}{f(x_1)} = g(x_2), \mbox{ or }
    x_1 = h^{-1}(g(x_2)) =: \gamma(x_2).
  \end{equation*}
  For positive arguments, the function $h$ is increasing, hence
  $h^{-1}$ is increasing too, and with $g$ decreasing, $\gamma$ is
  therefore a decreasing function. One readily shows that
  $\gamma(0) > 0$ and $\gamma(1) < 1$ which together forms a second
  proof of the existence of the unique fix point $\bar{x}_0$ for the
  homogeneous stationary state. However, we are rather interested in
  cyclic solutions, i.e., for which
  $x = (\gamma \circ \gamma)(x) =: \gamma_2(x)$, since these
  correspond to alternating (patterned) solutions in the 2-cell
  problem. It is easy to see that $\gamma_2(0) > 0$ and
  $\gamma_2(1) < 1$ and since $\gamma_2(\bar{x}_0) = \bar{x}_0$ we
  find two additional solutions $\bar{x}_1 < \bar{x}_0 < \bar{x}_2$
  under the condition that $\gamma_2'(\bar{x}_0) > 1$,
  cf.~\figref{fixpnt}~(\textit{right}). We get
  \begin{align}
    \label{eq:cond0_}
    \frac{d}{d\xi} \gamma(\gamma(\xi)) 
    \vert_{\xi=\bar{x}_0} > 1 &\iff \gamma'(\gamma(\xi))\gamma'(\xi)
    \vert_{\xi=\bar{x}_0} = \gamma'(\bar{x}_0)^2 > 1 \iff \gamma'(\bar{x}_0) < -1.
  \end{align}
  We find via implicit differentiation and using
  $\bar{x}_0 = \gamma(\bar{x}_0)$ that
  \begin{align*}
    \gamma'(\bar{x}_0) &= \frac{g'(\bar{x}_0)}{h'\left( \gamma(\bar{x}_0)\right)}
                         = \frac{f(\bar{x}_0)
                         g'(\bar{x}_0)}{1-\bar{x}_0 \, f'(\bar{x}_0)/f(\bar{x}_0)}
                         = \frac{f(\bar{x}_0) g'(\bar{x}_0)}{1-g(\bar{x}_0)f'(\bar{x}_0)},
  \end{align*}
  revealing that, in fact, \eqref{eq:cond0_} is equivalent to
  condition \eqref{eq:cond0}.
\end{proof}

One cannot rule out the existence of more than one set of non-homogeneous
solutions. To select a specific one, we pick the one pair
$(\bar{x}_1,\bar{x}_2)$ which is the furthest away from
$\bar{x}_0$. By inspection this solution also satisfies
\begin{align}
  \label{eq:prop_}
  \gamma_2'(\bar{x}_1) = \gamma_2'(\bar{x}_2) = \gamma'(\bar{x}_1)
  \gamma'(\bar{x}_2) < 1,
\end{align}
cf.~\figref{fixpnt}~(\textit{right}). Interestingly, this property
guarantees stability of this solution as we next demonstrate.

\begin{proposition}
  \label{prop:non-hom_stable1}
  The non-homogeneous solution of
  Proposition~\ref{prop:non-hom_exists} is stable whenever it exists.
\end{proposition}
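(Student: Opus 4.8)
The plan is to linearise the two-cell system \eqref{eq:twocell} about the non-homogeneous steady state $(\bar{x}_1,\bar{x}_2)$ and apply the Routh--Hurwitz criterion for $2\times 2$ systems: negative trace together with positive determinant of the Jacobian guarantees asymptotic stability. The Jacobian at $(\bar{x}_1,\bar{x}_2)$ is
\begin{equation*}
  J = \begin{pmatrix}
    g(\bar{x}_2)f'(\bar{x}_1)-1 & g'(\bar{x}_2)f(\bar{x}_1) \\
    g'(\bar{x}_1)f(\bar{x}_2) & g(\bar{x}_1)f'(\bar{x}_2)-1
  \end{pmatrix}.
\end{equation*}
First I would dispose of the trace condition using only monotonicity: since $f,g>0$ and $f'<0$, both diagonal entries are strictly below $-1$, so $\operatorname{tr} J < -2 < 0$ with no further work.

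The substance is the determinant. Writing $A_1,A_2$ for the diagonal and $B_1,B_2$ for the off-diagonal entries, we have $\det J = A_1A_2 - B_1B_2$, where $A_1A_2>0$ (a product of two negatives) and $B_1B_2 = g'(\bar{x}_1)g'(\bar{x}_2)f(\bar{x}_1)f(\bar{x}_2) > 0$. The key step is to recognise that the ratio $B_1B_2/(A_1A_2)$ is exactly the product $\gamma'(\bar{x}_1)\gamma'(\bar{x}_2)$. To see this I would reuse the implicit-differentiation computation from the proof of Proposition~\ref{prop:non-hom_exists}, but evaluated at the cyclic points rather than at $\bar{x}_0$: invoking the stationary relations $\bar{x}_1 = g(\bar{x}_2)f(\bar{x}_1)$ and $\bar{x}_2 = g(\bar{x}_1)f(\bar{x}_2)$ gives
\begin{equation*}
  \gamma'(\bar{x}_2) = \frac{f(\bar{x}_1)g'(\bar{x}_2)}{1-g(\bar{x}_2)f'(\bar{x}_1)},
  \qquad
  \gamma'(\bar{x}_1) = \frac{f(\bar{x}_2)g'(\bar{x}_1)}{1-g(\bar{x}_1)f'(\bar{x}_2)},
\end{equation*}
whose product is precisely $B_1B_2/(A_1A_2)$. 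Hence $\det J = A_1A_2\bigl(1-\gamma'(\bar{x}_1)\gamma'(\bar{x}_2)\bigr)$.

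Finally I would invoke the selection property \eqref{eq:prop_}, which guarantees that the chosen outermost pair satisfies $\gamma'(\bar{x}_1)\gamma'(\bar{x}_2) < 1$; combined with $A_1A_2>0$ this yields $\det J > 0$, and with the trace already negative the Routh--Hurwitz conditions hold, giving stability. The main obstacle is bookkeeping rather than analysis: correctly identifying $\gamma'$ at the two cyclic points $\bar{x}_1,\bar{x}_2$ (not at $\bar{x}_0$, as in Proposition~\ref{prop:non-hom_exists}) and tracking signs so that the geometric slope-product property \eqref{eq:prop_} translates cleanly into the algebraic determinant inequality. Once that dictionary is established, the stability conclusion is immediate.
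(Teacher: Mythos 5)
Your proposal is correct and follows essentially the same route as the paper: the paper also reduces stability to the sign of the constant term of the two-by-two characteristic polynomial (its Descartes-rule-of-sign step is, for a quadratic with positive $\lambda$-coefficient, exactly your Routh--Hurwitz trace/determinant argument), and it likewise identifies $f_1f_2g_1'g_2'/\bigl[(1-f_1'g_2)(1-f_2'g_1)\bigr]$ with $\gamma'(\bar{x}_1)\gamma'(\bar{x}_2)$ via the same implicit differentiation before invoking \eqref{eq:prop_}. No substantive difference.
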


\begin{proof}
  The Jacobian around the non-homogeneous solution has the
  characteristic polynomial
  \begin{align*}
    p(\lambda) &= (\lambda - f'_1g_2+1)(\lambda - f'_2g_1+1)-
                 f_1 f_2 g_1'g_2',
  \end{align*}
  where $f_1 = f(\bar{x}_1)$ and similarly for $f'_1$, $g_1$, $g'_2$,
  etc.  By inspection all coefficients are positive except for the 0th
  order term. By Descarte's rule of sign there is a positive real
  eigenvalue if an only if this term is negative, that is, the
  non-homogeneous stationary solution is stable if and only if
  \begin{align}
    \label{eq:cond1_}
    0 &< f'_1f'_2g_1g_2 - f_1f_2g_1'g_2' - f'_1g_2 - f'_2g_1 + 1.
  \end{align}
  For the function $\gamma$ introduced in the proof of
  Proposition~\ref{prop:non-hom_exists}, we have
  \begin{align*}
    \gamma'(\bar{x}_1) &=
        \frac{g'_1}{h'\left( \gamma(\bar{x}_1)\right)}
        = \frac{f_2 g'_1}{1-\bar{x}_2 \, f'_2/f_2}
        = \frac{f_2 g'_1}{1-f'_2g_1},
  \end{align*}
  and similarly for $\gamma'(\bar{x}_2)$. Hence, from rearranging the
  property \eqref{eq:prop_} we find
  \begin{align*}
    (1-f'_2g_1)(1-f'_1g_2) &> f_1 f_2 g'_1 g'_2,
  \end{align*}
  which is equivalent to condition \eqref{eq:cond1_}.
\end{proof}

%

So far we have shown that there always exists a unique homogeneous
stationary solution. For the 2-cell periodic problem and under
condition \eqref{eq:cond0}, this solution is unstable and there is
then another non-homogeneous solution which \emph{is}
stable. \figref{fac_bifurcation} illustrates this behaviour along a
certain selected path in parameter space for the full model
\eqref{eq:ode}. We next proceed to show that as suggested by this
graphic, the results indeed hold for the full model as well.

\begin{SCfigure}
  \includegraphics{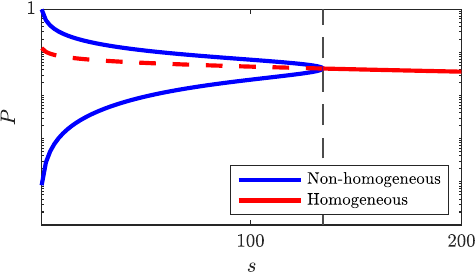}
  \caption{The non-homogeneous and homogeneous stationary states
    \review{of the Hes1 protein} $P$ (log-scale), respectively, as a
    function of a scaling $s$, which acts upon the parameter
    $\alpha_N$, by scaling $\alpha_N \mapsto s^{-1}\alpha_N$. The
    homogeneous solution always exists, but is unstable to the left of
    the dashed line which indicates the smallest value of $s$ for
    which \eqref{eq:cond0} and \eqref{eq:cond1} are true.}
  \label{fig:fac_bifurcation}
\end{SCfigure}

\subsection{Extension to the Full Model}
\label{sec:steady_5}

To understand in what way the reduced models capture the stability
properties of the full model, we need to describe how they are related
at sufficient detail. Let a general ODE have the form $\dot{x} = F(x)$
and assume that the state has been split according to $x = [y; \; z]$,
that is,
\begin{align}
  \begin{bmatrix}\dot{y} \\ \dot{z} \end{bmatrix}
  &= \begin{bmatrix} F_y(y,z)\\
              F_z(y,z) \end{bmatrix}.
\end{align}
The reduced model for $z$ is obtained by assuming that
$\dot{y} \approx 0$ and such that, given $z$, $y$ can be uniquely
solved for
\begin{align}
  0 &= F_y(y,z) \iff y = G(z).
\end{align}
The reduced model is then simply
\begin{align}
  \dot{z} &= F_z(G(z),z),
\end{align}
and the reduced model's Jacobian is given by
\begin{align}
  \label{eq:J_z}
  J_z &= \partial_y F_z G'(z)+\partial_z F_z =
        -\partial_y F_z [\partial_y F_y]^{-1}\partial_zF_y+\partial_z F_z.
\end{align}
By contrast, the full Jacobian reads
\begin{align}
  J &= \begin{bmatrix}
                \partial_y F_y & \partial_z F_y \\
                \partial_y F_z & \partial_z F_z
              \end{bmatrix},
\end{align}
and by a block decomposition \cite{MatrixAnalysis} the determinant is
given by
\begin{align}
  \label{eq:det_J}
  \det(J) &= \det(\partial_y F_y ) \times \det \left( \partial_z
            F_z-\partial_y F_z [\partial_y F_y]^{-1}\partial_zF_y \right).
\end{align}

In general, both Jacobians $J$ and $J_z$ depend on a parameter vector
$\theta$, say, such that we can write $J = J(\theta)$ and equivalently
for $J_z$. Since the determinant of the negative Jacobian is the 0th
order term of the characteristic polynomial, we formulate the
following lemma by comparing \eqref{eq:J_z} and \eqref{eq:det_J}:
\begin{lemma}
  \label{lem:sign}
  Let $p_x(\lambda) \equiv \det(\lambda I-J)$ be the characteristic
  polynomial for the full Jacobian and equivalently define
  $p_z(\lambda) \equiv \det(\lambda I-J_z)$. Suppose that for some
  parameter $\theta$, all coefficients are positive except for
  possibly the 0th order term $p_x(0)$. Suppose also that the order
  reduction is definite in the sense that $\partial_y F_y$ is either
  positive or negative definite for all considered parameters
  $\theta$. Then, as a function of $\theta$, $p_x(0)$ switches sign
  simultaneously with $p_z(0)$ and in fact,
  $p_x(0) = \det(-\partial_y F_y) \times p_z(0)$.
\end{lemma}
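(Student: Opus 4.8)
The plan is to reduce the whole statement to the two identities already in hand, \eqref{eq:J_z} and \eqref{eq:det_J}, together with one elementary fact about characteristic polynomials: for any square matrix $A$ of size $N$, the constant term of $\det(\lambda I - A)$ equals $\det(-A) = (-1)^N \det A$. Hence $p_x(0) = \det(-J)$ and $p_z(0) = \det(-J_z)$, and the entire lemma becomes a comparison of $\det J$ with $\det J_z$. First I would fix dimensions, writing $y \in \mathbb{R}^k$ and $z \in \mathbb{R}^m$, so that $J$ is $(k+m)\times(k+m)$, $\partial_y F_y$ is $k \times k$, and $J_z$ is $m \times m$. The key observation is that the second factor in the block determinant formula \eqref{eq:det_J} is precisely the reduced Jacobian: comparing with \eqref{eq:J_z}, the Schur complement $\partial_z F_z - \partial_y F_z [\partial_y F_y]^{-1}\partial_z F_y$ is exactly $J_z$, so \eqref{eq:det_J} reads $\det J = \det(\partial_y F_y)\,\det J_z$.

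The identity then follows by tracking signs. Using $p_x(0) = (-1)^{k+m}\det J$ and splitting the prefactor as $(-1)^{k+m} = (-1)^k(-1)^m$, I would write
\[
  p_x(0) = \left[(-1)^k \det(\partial_y F_y)\right]\left[(-1)^m \det J_z\right]
  = \det(-\partial_y F_y)\cdot p_z(0),
\]
which is the claimed relation, since $(-1)^k \det(\partial_y F_y) = \det(-\partial_y F_y)$ for the $k \times k$ matrix $\partial_y F_y$.

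For the simultaneous sign change I would invoke the definiteness hypothesis. A positive or negative definite matrix has only nonzero eigenvalues, so $\det(-\partial_y F_y) \neq 0$ throughout; moreover, since $\partial_y F_y$ stays definite as $\theta$ varies — a definite matrix cannot acquire a zero eigenvalue — the determinant $\det(-\partial_y F_y)$ never crosses zero and hence keeps a (locally) constant sign. Consequently $p_x(0)$ is a fixed nonzero multiple of $p_z(0)$, so the two quantities vanish at the same parameter values and change sign together as $\theta$ varies, which is the statement to be proved.

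The main obstacle is bookkeeping rather than conceptual: one must correctly match the Schur complement in \eqref{eq:det_J} to the reduced Jacobian in \eqref{eq:J_z} and then distribute the $(-1)^{k+m}$ prefactor into its $k$- and $m$-dimensional pieces. The only genuinely load-bearing hypothesis is the definiteness of $\partial_y F_y$; without it the prefactor could vanish or flip sign, breaking the ``simultaneous'' conclusion even though the algebraic identity $p_x(0) = \det(-\partial_y F_y)\,p_z(0)$ would continue to hold formally. The assumption that all higher-order coefficients of $p_x$ are positive is not used in the lemma itself, but it sets up the intended application, where the sign of $p_x(0)$ controls the existence of a positive real eigenvalue via Descartes' rule of signs.
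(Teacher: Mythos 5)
Your proof is correct and follows exactly the route the paper intends: it identifies the Schur complement in \eqref{eq:det_J} with the reduced Jacobian \eqref{eq:J_z}, uses $p_x(0)=\det(-J)$ and $p_z(0)=\det(-J_z)$ to distribute the sign prefactor, and invokes definiteness of $\partial_y F_y$ only to keep the factor $\det(-\partial_y F_y)$ nonzero and of constant sign. The sign bookkeeping $(-1)^{k+m}=(-1)^k(-1)^m$ and the closing remark that the positivity of the higher-order coefficients is not used in the lemma itself but only in its application via Descartes' rule are both accurate.
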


The main use of the lemma is in conjunction with Descarte's rule of
sign as it allows one to conclude that the spectrum of $J$ switches
from stable to unstable at points for which $J_z$ is singular. The
expressions for these points are typically simpler to obtain than for
the full system. However, one still has to show that the full
characteristic polynomial has positive terms of higher order than 0.

\begin{proposition}
  \label{prop:hom_unstable1D_full}
  Let $P_0$ be the homogeneous stationary solution for state $P$ of
  the full model \eqref{eq:ode}. This solution is unstable for the
  2-cell periodic problem if and only if
  \begin{align}
    \label{eq:cond1}
    -\frac{h (P_0/K_n)^h}{1+(P_0/K_n)^h}+
    k P_0 (P_0/K_M)^k (1+(P_0/K_n)^h) \times R &< -1,
  \end{align}
  where $R \equiv \prod_i \alpha_i/\mu_i$, $i \in \{ D,N,M,P,n\}$.
\end{proposition}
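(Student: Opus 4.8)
The plan is to lift the reduced-model argument of Proposition~\ref{prop:hom_unstable1D} to the full five-species system, using Lemma~\ref{lem:sign} to transport the reduced instability condition \eqref{eq:cond0} into the explicit parameter form \eqref{eq:cond1}. First I would write out the two-cell periodic version of \eqref{eq:ode} (ten equations), in which the only inter-cellular coupling is the linear Notch production term $\alpha_N\langle\Din\rangle$, so that cell~1 receives $D_2$ and cell~2 receives $D_1$. Linearising about the homogeneous steady state $(D_0,N_0,M_0,P_0,n_0)$ and applying the symmetric/antisymmetric substitution \eqref{eq:changeofvariables} to every species separately, the $10\times 10$ Jacobian block-diagonalises into a symmetric block $J_\sigma$ and an antisymmetric block $J_\delta$, differing only in that the $D\mapsto N$ coupling becomes $-\alpha_N$ in $J_\delta$. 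The pattern-forming instability is carried entirely by $J_\delta$, while $J_\sigma$ governs the in-phase relaxation, which I would confirm is stable in the operating regime (mirroring the reduced case, where the symmetric eigenvalue $fg'+f'g-1$ is manifestly negative), so that instability of the homogeneous solution reduces to instability of $J_\delta$.

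The core of the argument is to analyse $p_\delta(\lambda)=\det(\lambda I-J_\delta)$ and to supply precisely the hypothesis that Lemma~\ref{lem:sign} leaves open: that every coefficient except the constant term is positive. I would argue this structurally from the signed digraph of $-J_\delta$. Its diagonal entries are the positive degradation rates $\mu_i$, and its only off-diagonal cycles are a single $2$-cycle between $M$ and $P$ (from $P$ repressing $M$-production and $M$ producing $P$) and the single long feedback $5$-cycle $n\to D\to N\to M\to P\to n$. Since a $k\times k$ principal minor collects only permutations supported on $k$ indices, and the only nontrivial cycle on fewer than five indices is the $\{M,P\}$ $2$-cycle (whose minor is positive because the repressive Hill term is decreasing), all principal minors of size $1\le k\le 4$ are sums of products of the $\mu_i$ with this one positive $2$-cycle minor, hence positive; only the constant term $\det(-J_\delta)$ receives the $5$-cycle contribution and can change sign. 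Establishing this positivity is the main obstacle, as it is exactly the piece the lemma does not provide.

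With positivity in hand, Descarte's rule of sign gives a positive real eigenvalue, and thus instability, if and only if $\det(-J_\delta)<0$. I would then invoke Lemma~\ref{lem:sign} with the splitting in which the surviving variable $z$ is the one retained by the chosen type-$1$ reduction (say $M$) and $y$ collects the four eliminated species. The eliminated block $\partial_yF_y$ is triangular with strictly negative diagonal $-\mu_i$ (its sub-digraph, the path $P\to n\to D\to N$, is acyclic), so $\det(-\partial_yF_y)=\prod_{i\in y}\mu_i>0$ is of constant sign, which furnishes the definiteness requirement. The lemma then yields $\det(-J_\delta)=\det(-\partial_yF_y)\times p_z(0)$, whence $\det(-J_\delta)<0$ exactly when $p_z(0)<0$, and $p_z(0)<0$ is precisely condition \eqref{eq:cond0} of Proposition~\ref{prop:hom_unstable1D}.

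It remains to convert \eqref{eq:cond0}, namely $f(\bar x_0)g'(\bar x_0)-f'(\bar x_0)g(\bar x_0)<-1$, into \eqref{eq:cond1}. Here I would use the fixed-point identity $\bar x_0=f(\bar x_0)g(\bar x_0)$ to write $fg'-f'g=\bar x_0\,(g'/g-f'/f)$, together with the logarithmic-derivative identities for the two Hill functions, which give $\bar x_0\,g'/g=-h\,(P_0/K_n)^h/(1+(P_0/K_n)^h)$ and the analogous $f$-term, and finally the homogeneous steady-state chain obtained by multiplying the five stationary relations of \eqref{eq:ode}, yielding $P_0\,(1+(P_0/K_M)^k)(1+(P_0/K_n)^h)=R$ with $R=\prod_i\alpha_i/\mu_i$. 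Substituting these identifications, together with the scaling relating the reduced variable $\bar x_0$ to $P_0$, into \eqref{eq:cond0} is a routine but bookkeeping-heavy computation that reproduces \eqref{eq:cond1}.
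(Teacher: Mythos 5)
Your proposal follows essentially the same route as the paper's proof: the symmetric/antisymmetric change of variables \eqref{eq:changeofvariables}, verification that all coefficients of the ($\delta$-block) characteristic polynomial except the constant term are positive, Descartes' rule of sign together with Lemma~\ref{lem:sign} (with $\det(-\partial_y F_y)=\mu_D\mu_N\mu_M\mu_n>0$) to transfer condition \eqref{eq:cond0} from the reduced to the full model, and finally the steady-state and fixed-point identities to rewrite \eqref{eq:cond0} as \eqref{eq:cond1}. The only difference is cosmetic: the paper establishes the coefficient positivity by explicitly computing $p(\lambda)$ via iterated cofactor expansion and inspecting the result, whereas you argue it structurally from the cycle structure of the Jacobian's signed digraph (only the $M$--$P$ two-cycle and the full five-cycle exist), which is a sound substitute for the same step.
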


Under the reduction \eqref{eq:red_type1} (cf.~Appendix~\ref{app:red})
we have that
\begin{align}
  \label{eq:red1}
  a &= (K_M/R)^{k/(k+1)}, \\
  b &= \left( K_M^{k/(k + 1)}/K_n \times R^{1/(k+1)} \right)^h, \\
  \label{eq:red3}
  P_0 &= R^{1/(k+1)} \times (\alpha_P/\mu_P)^{2k/(k+1)} K_M^{-k/(k+1)}
        \times \bar{x}_0,
\end{align}
where we recall that $\bar{x}_0 \in (0,1)$ is the homogeneous
stationary solution for the reduced model as in
Proposition~\ref{prop:hom_exists}.

\begin{proof}
  After the same type of change of variables as in
  \eqref{eq:changeofvariables} and linearising around small
  perturbations, we obtain a relatively sparse linear time-dependent
  system. The characteristic polynomial $p(\lambda)$ can therefore be
  obtained via iterated cofactor expansions. Writing
  $f_k(x) = 1/(1+x^k)$ and similarly for $f_h$, we find
  \begin{align*}
    p(\lambda)
    &= (\lambda + \mu_D)(\lambda + \mu_N)(\lambda + \mu_n)
      \left[(\lambda + \mu_M)(\lambda + \mu_P) - N_0 \alpha_M \alpha_P
      f'_k(P_0/K_M)/K_M\right] \\
    &\hphantom{=}
      +\alpha_D \alpha_N \alpha_M \alpha_P \alpha_n
      f'_h(P_0/K_n) f_k(P_0/K_M)/K_n.
  \end{align*}
  By inspection all coefficients of the polynomial are positive except
  for possibly the constant term. We verify that
  $\det(-\partial_y F_y) = \mu_N \mu_M \mu_D \mu_n > 0$ in the
  notation of Lemma~\ref{lem:sign} and so it follows that the
  stability condition \eqref{eq:cond0} is preserved by the state
  reduction. Using the relations \eqref{eq:red1}--\eqref{eq:red3} and
  the fix point relation $\bar{x}_0 = f(\bar{x}_0)g(\bar{x}_0)$ we
  find that \eqref{eq:cond0} is equivalent to \eqref{eq:cond1}.
\end{proof}

It remains to show that the non-homogeneous solution also shares its
stability properties with the reduced model.

\begin{proposition}
  \label{prop:non-hom_stable1D_full}
  The non-homogeneous solution of
  Proposition~\ref{prop:non-hom_exists} is stable
  for the full model \eqref{eq:ode} whenever it exists.
\end{proposition}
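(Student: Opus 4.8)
The plan is to mirror the argument of Proposition~\ref{prop:hom_unstable1D_full}, but now linearising the full two-cell system \eqref{eq:ode} about the \emph{non-homogeneous} stationary solution supplied by Proposition~\ref{prop:non-hom_exists}. The essential difference is that the symmetric change of variables \eqref{eq:changeofvariables} no longer decouples the dynamics, since the two cells sit at genuinely different states $x_1 \neq x_2$; I would therefore work directly with the full $10\times 10$ Jacobian. Its structure is nonetheless very sparse: the only inter-cellular coupling enters through $\dot{N}$ (each cell's Notch responds to the neighbour's Dll1 at rate $\alpha_N$), while within each cell the nonzero off-diagonal entries trace the single chain $n \to D$, $N \to M$, $M \to P$, $P \to n$, together with the Hill-repression term $P \to M$.

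First I would compute the characteristic polynomial $p_x(\lambda)$ by expanding the determinant as a sum over cycle covers of this directed graph. The only feedback cycles available are the two within-cell $M\leftrightarrow P$ transpositions and a single length-ten cycle threading both cells through the coupling $D \to N$. Schematically the result should read as $\prod_i(\lambda+\mu_i)$, plus positive multiples of the shorter factors $\prod_j(\lambda+\mu_j)$ coming from the $M\leftrightarrow P$ cycles, minus the constant contribution of the long cycle. The $M\leftrightarrow P$ pieces are positive because $f'_k<0$ makes each transposition product $\alpha_M\alpha_P N_i f'_k(P_i/K_M)/K_M$ negative, a sign the transposition then flips; consequently every coefficient of $p_x$ is positive \emph{except possibly} the constant term, whose sole negative piece is the even-length cycle carrying the two factors $f'_h<0$.

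This is precisely the hypothesis of Lemma~\ref{lem:sign}. Taking $z$ to be the retained scalar variable in each cell and $y$ the four reduced species $D,N,P,n$ per cell, the block $\partial_y F_y$ is, after reordering, triangular with diagonal $-\mu_D,-\mu_N,-\mu_P,-\mu_n$ (the coupling $N\leftarrow D$ closes no cycle inside $y$), so the reduction is definite with $\det(-\partial_y F_y)=(\mu_D\mu_N\mu_P\mu_n)^2>0$. Lemma~\ref{lem:sign} then gives $p_x(0)=\det(-\partial_y F_y)\times p_z(0)$, where $p_z$ is the degree-two characteristic polynomial of the reduced two-cell Jacobian of Proposition~\ref{prop:non-hom_stable1}. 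That proposition already established $p_z(0)>0$, via the rearrangement of property \eqref{eq:prop_} into \eqref{eq:cond1_}, so $p_x(0)>0$ as well. With all higher coefficients positive, Descartes' rule of sign leaves no positive real eigenvalue, and the non-homogeneous solution is stable for the full model whenever it exists.

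The main obstacle is the second step. Because the non-homogeneous background destroys the $\sigma/\delta$ symmetry that collapsed the homogeneous case to a five-dimensional problem, one cannot take that shortcut and must instead verify directly—through the cycle-cover bookkeeping and careful tracking of the signs of $f'_k$ and $f'_h$—that the full degree-ten polynomial has positive coefficients in every order above the constant term. Once that structural fact is secured, the reduction via Lemma~\ref{lem:sign} together with the already-proven reduced-model stability closes the argument immediately.
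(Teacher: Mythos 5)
Your proposal follows essentially the same route as the paper: expand the sparse $10\times 10$ characteristic polynomial (your cycle-cover bookkeeping reproduces exactly the paper's factored form, with the two within-cell $M\leftrightarrow P$ blocks and the single length-ten inter-cell cycle supplying the only possibly-negative, constant-order term), then invoke Lemma~\ref{lem:sign} with a definite elimination block to transfer the sign of $p_x(0)$ to the reduced two-cell polynomial of Proposition~\ref{prop:non-hom_stable1}, and close with Descartes' rule. The only (immaterial) difference is which species you retain as the scalar variable and that you route the final sign check explicitly through property \eqref{eq:prop_} rather than restating condition \eqref{eq:cond1}.
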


\begin{proof}
  This time we linearise around the non-homogeneous solution and
  obtain a 10-by-10 Jacobian. Luckily the Jacobian is rather sparse
  such that its characteristic polynomial can be expanded into
  \begin{align*}
    p(\lambda) &= (\lambda + \mu_D)^2(\lambda + \mu_N)^2(\lambda +
                 \mu_n)^2 \times \\
               &\phantom{=} \left[(\lambda + \mu_M)(\lambda + \mu_P) -
                 N_1\alpha_M\alpha_Pf'_k(P_1/K_M)/K_M\right] \times \\
                 &\phantom{=} \left[(\lambda + \mu_M)(\lambda + \mu_P) -
                 N_2\alpha_M\alpha_Pf'_k(P_2/K_M)/K_M \right] \\
               &\phantom{=} -
                 (\alpha_D\alpha_N\alpha_M\alpha_P\alpha_n/K_n)^2
                 f'_h(P_1/K_n)f'_h(P_2/K_n)f_k(P_1/K_M)f_k(P_2/K_M).
  \end{align*}
  All coefficients of the polynomial are positive except for possibly
  the constant term. The reduction map is verified to be positive
  definite and so we conclude that the stability condition
  \eqref{eq:cond1} again controls the stability also of the
  non-homogeneous solution.
\end{proof}

\subsection{Patterning on Regular Hexagonal Tilings}
\label{sec:spatial_periodic}

Next we are interested in analysing the patterning that occurs when a
non-homogeneous steady state is reached. From a biological
perspective, a ``random'' pattern or a chaotic non-stationary
behaviour is implausible in a highly regulated pathway. As we have
shown in Propositions \ref{prop:hom_unstable1D} and
\ref{prop:hom_unstable1D_full}, the homogeneous steady state is
unstable in both the reduced and the full models under conditions
\eqref{eq:cond0} and \eqref{eq:cond1}, assuming a two-cell system with
periodic couplings. When the homogeneous steady state \emph{is}
unstable, the heterogeneous steady state exists and is stable
(Propositions \ref{prop:non-hom_exists}, \ref{prop:non-hom_stable1},
\ref{prop:non-hom_stable1D_full}). Again, this holds for the simple
case of a two-cell system with periodic couplings, but we are
nevertheless led to believe that a regular periodic pattern will
eventually result from the model.

On a regular hexagonal tiling, there are a multitude of regularly
periodic patterns that can occur, yet not every such periodic pattern
is \emph{uniform} or \emph{vertex-transitive}. In the case of the
non-uniform pattern illustrated in~\figref{regular_pattern}, for
example, we require two `sub-types' of white cells: those that border
black cells and those that do not. We argue that for symmetry reasons,
assuming a vertex-transitive pattern is a reasonable assumption for
further investigations of our model. There exist three such uniform
colourings on a regular hexagonal tiling \cite{tilings},
see~\figref{hex_tilings}, one of which describes the homogeneous case.

\begin{SCfigure}[][t]
    \includegraphics{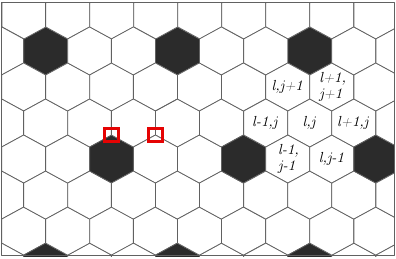}
    \caption{A regularly periodic pattern on a hexagonal grid with
      period $3$ in each lattice direction. The labelling scheme shown
      follows \cite{Collier1996PatternSignalling}. The red squares
      highlight two vertices which make the pattern non-transitive:
      the left borders 2 white and one black cell, while the right
      borders 3 white cells.}
    \label{fig:regular_pattern}
\end{SCfigure}

\begin{figure}[b]
    \centering
    \includegraphics{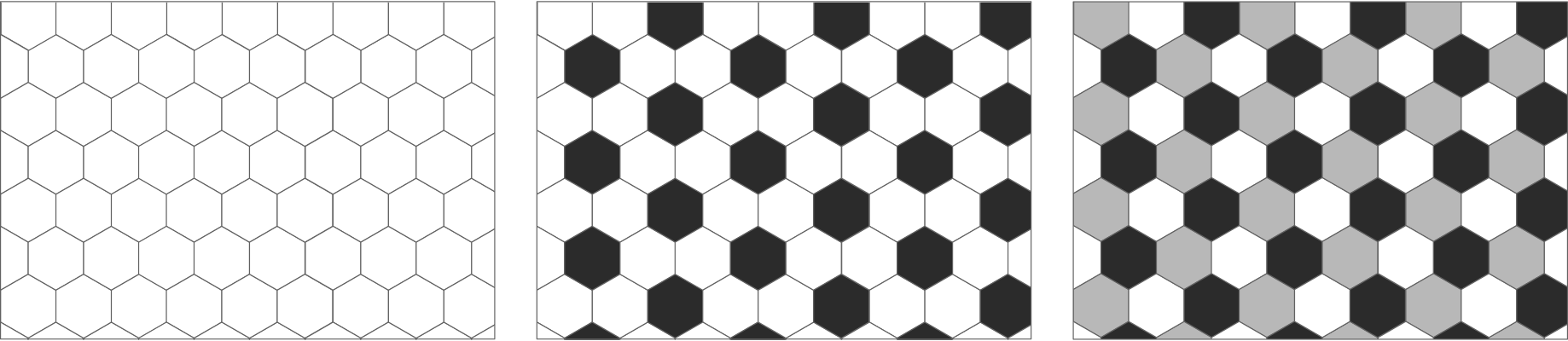}
    \caption{The three patterns on a regular hexagonal tiling which
      show vertex transitive behaviour, i.e., every vertex has the
      same neighbours. Note that the right pattern is identical to the
      middle pattern if two colours are the same.}
    \label{fig:hex_tilings}
\end{figure}

\begin{proposition}
  Under periodic boundary conditions the homogeneous steady state of
  the reduced system \eqref{eq:red_type4} is conditionally unstable in
  both one and two dimensions. Conditions for instability are the
  previous \eqref{eq:cond0}, while in 2D the condition is found in
  \eqref{eq:cond1per} below.
\end{proposition}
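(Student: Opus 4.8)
The plan is to carry out a linear stability analysis of the homogeneous steady state $\bar{x}_0$ (which exists uniquely by Proposition~\ref{prop:hom_exists}) on a general periodic grid, and to reduce the instability question to a statement about the smallest eigenvalue of the coupling matrix. First I would place a copy of \eqref{eq:red_type4} on each node $k$ of a periodic grid, with normalised neighbour weights $w_{kj}$ satisfying $\sum_j w_{kj} = 1$, and set $x_k = \bar{x}_0 + \epsilon_k$. Since the weights sum to one, both averaging conventions (inside or outside $g$) linearise identically, and to first order
\begin{equation*}
  \dot{\epsilon}_k = f(\bar{x}_0)g'(\bar{x}_0) \sum_j w_{kj}\epsilon_j
  + \left( g(\bar{x}_0)f'(\bar{x}_0)-1 \right)\epsilon_k.
\end{equation*}
Writing $W = (w_{kj})$ for the row-stochastic coupling matrix, $A := f(\bar{x}_0)g'(\bar{x}_0)$ and $B := g(\bar{x}_0)f'(\bar{x}_0)$, this is $\dot{\epsilon} = (AW + (B-1)I)\epsilon$; because $f,g>0$ are decreasing, $A<0$ and $B<0$.

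Next I would diagonalise: each eigenvalue $\omega$ of $W$ gives a mode with growth rate $\lambda(\omega) = A\omega + B - 1$. As $W$ is row-stochastic on a regular connected grid its spectrum lies in $[-1,1]$, with the Perron value $\omega=1$ being the uniform mode; there $\lambda(1) = A + B - 1 < -1 < 0$, so homogeneous perturbations always decay. Since $A<0$, the map $\omega\mapsto\lambda(\omega)$ is decreasing, so a positive growth rate can only come from the most negative eigenvalue, and instability holds if and only if $A\,\omega_{\min} + B - 1 > 0$. In one dimension with two periodically coupled cells, $W = \left(\begin{smallmatrix}0&1\\1&0\end{smallmatrix}\right)$ has $\omega_{\min} = -1$, and the criterion becomes $-A + B - 1 > 0$, i.e.\ $f(\bar{x}_0)g'(\bar{x}_0) - f'(\bar{x}_0)g(\bar{x}_0) < -1$, recovering exactly \eqref{eq:cond0}.

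For two dimensions the cell centres of a hexagonal tiling form a triangular lattice in which every node has six neighbours of weight $1/6$. I would obtain $\omega_{\min}$ by Bloch analysis: with primitive vectors $\mathbf{a}_1,\mathbf{a}_2$ and ansatz $\epsilon_k \propto e^{i\mathbf{q}\cdot\mathbf{r}_k}$, the eigenvalue is the structure factor
\begin{equation*}
  \omega(\mathbf{q}) = \tfrac{1}{3}\left[\cos\alpha + \cos\beta + \cos(\alpha-\beta)\right],
  \qquad \alpha = \mathbf{q}\cdot\mathbf{a}_1,\ \beta = \mathbf{q}\cdot\mathbf{a}_2.
\end{equation*}
Minimising over the Brillouin zone gives $\omega_{\min} = -1/2$, attained at the corner point $\alpha = 2\pi/3$, $\beta = -2\pi/3$; this wavevector has period $3$ along each lattice direction and corresponds precisely to the three-colouring vertex-transitive pattern of \figref{hex_tilings}. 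Substituting yields the 2D criterion $-\tfrac12 A + B - 1 > 0$, i.e.\ $g(\bar{x}_0)f'(\bar{x}_0) - \tfrac12 f(\bar{x}_0)g'(\bar{x}_0) > 1$. Finally I would translate $A$ and $B$ back to the original parameters using \eqref{eq:red1}--\eqref{eq:red3} and the fixed-point identity $\bar{x}_0 = f(\bar{x}_0)g(\bar{x}_0)$, exactly as in the proof of Proposition~\ref{prop:hom_unstable1D_full}, obtaining the stated condition \eqref{eq:cond1}.

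The main obstacle is the spectral minimisation on the triangular lattice: showing rigorously that $\omega_{\min} = -1/2$ (rather than merely locating a critical point), checking that the minimising mode is compatible with the finite periodic boundary conditions actually imposed, and matching it to an admissible vertex-transitive colouring. The remaining steps are routine linear algebra and the same parameter substitution already performed in the one-dimensional full-model case.
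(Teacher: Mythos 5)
Your proposal is correct and follows essentially the same route as the paper: linearise about $\bar{x}_0$ (noting both averaging conventions agree to first order), diagonalise the coupling via Fourier/Bloch modes, and read off instability from the most negative eigenvalue of the row-stochastic neighbour matrix ($-1$ in 1D, $-1/2$ at the triangular-lattice zone corner in 2D), recovering \eqref{eq:cond0} and the 2D condition. The spectral minimisation you flag as the main obstacle is not treated more rigorously in the paper either --- there the minimising mode $s=r=N/3$ is simply asserted by inspection --- so no essential step is missing from your argument.
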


\begin{proof}
  On a one-dimensional lattice with $j = 1,2,\ldots,N$, we find from
  \eqref{eq:red_type4} the governing equations
  \begin{align*}
    \dot{x}_j &= \frac{g(x_{j-1})+g(x_{j+1})}{2} 
                f(x_j)-x_j. \\
    \intertext{Linearising around the homogeneous solution and writing
    $x_j = \bar{x}_0+\delta_j$ we find}
    \dot{\delta}_j &= g'f \frac{\delta_{j-1}+\delta_{j+1}}{2}
                     +gf' \delta_j-\delta_j.
  \end{align*}
  Inserting the Fourier representation
  \begin{align*}
    \delta_j &= \sum_{s=1}^N \xi_s \exp \left( \frac{2 \pi i s j }{N}
               \right), \\
    \intertext{we get}
    \dot{\xi}_s &= \left( g'f\cos\left( \frac{2\pi}{N} s \right)+gf'-1\right)\xi_s.
  \end{align*}
  By inspection the most unstable case occurs for $s/N = 1/2$ which
  is then equivalent to condition \eqref{eq:cond0}.

  On a two-dimensional, regular hexagonal lattice with
  $j=1,2,\ldots,M$ and $l=1,2,\ldots,N$, we similarly find the
  governing equations
  \begin{align*}
    \dot{x}_{lj} &= \frac{\sum g(x_{l\pm 1,j\pm 1})}{6}
                   f(x_{lj})-x_{lj}, \\
    \intertext{where the sum involves the 6 lattice neighbours,
    cf.~\figref{regular_pattern}. Linearising around the homogeneous
    solution, we find}
    \dot{\delta}_{lj} &= g'f \frac{\sum \delta_{l\pm 1,j\pm 1}}{6} 
                        +gf' \delta_{lj}-\delta_{lj}.
  \end{align*}
  Again making use of the Fourier representation,
  \begin{align*}
    \delta_{lj} &= \sum_{r=1}^M \sum_{s=1}^N \xi_{rs} \exp \left( 
                  \frac{2 \pi irl}{M} + \frac{2 \pi isj}{N} \right), \\
    \intertext{this becomes}
    \dot{\xi}_{rs} &= \left( g'f A+
                     gf'-1\right)\xi_{rs}, \\
    \intertext{where}
    3A &\equiv \cos\left( \frac{2\pi}{N} s \right)+
                     \cos\left( \frac{2\pi}{M} r \right)+
                     \cos\left( \frac{2\pi}{N} s + \frac{2\pi}{M} r
                     \right).
  \end{align*}
  The most unstable case occurs for $s = r = N/3$, assuming $N = M$
  and divisibility by 3. The implied condition for instability is then
  \begin{align}
    \label{eq:cond1per}
    f(\bar{x}_0)g'(\bar{x}_0)/2-f'(\bar{x}_0)g(\bar{x}_0) &< -1.
  \end{align}
\end{proof}
It is tempting to draw the conclusion that the corresponding unstable
frequency is also the resulting pattern: with period $N/3$ this would
indeed imply the middle pattern in \figref{hex_tilings} which is also
what we observe from numerical experiments. However, the analysis only
reveals the most unstable modes around the homogeneous solution and
does not predict the eventual end-fate.

From numerical experiments we consistently find that the typical
stationary pattern generally matches that
of~\figref{hex_tilings}~(\textit{middle}), with black/white
corresponding to, respectively, low/high Hes1 protein concentrations.
\review{Since the stationary state on the hexagonal mesh only consists
  of two distinct states it seems intuitive to attempt to analyse the
  two-dimensional situation by looking at the two-cell model coupled
  according to}
\begin{align}
  \label{eq:W2}
  W_2 &= \begin{bmatrix}
    0 & 1 \\
    1/2 & 1/2
  \end{bmatrix}, \\
  \intertext{\review{i.e., as observed each black cell only has white
  neighbours while white cells have on average three white and three
  black neighbours. Thus, we consider the generic coupled model}}
  \label{eq:generic_type4}
  \dot{x}_i &= \sum_j W_{ij} g(x_j) \times f(x_i)-x_i,
\end{align}
for $i = 1,2$ and $W = W_2$. However, this immediate two-dimensional
extension of the two-cell periodic one-dimensional case gives
incorrect results. This case supports non-homogeneous \emph{stable}
solutions which are close to the homogeneous one but which are never
observed in larger simulations.

A better generalisation \review{of the observed behaviour on the
  hexagonal mesh} is rather \emph{three} cells, that is, the smallest
integer multiple of three as suggested by the previous Fourier
analysis. Namely, we take the generic model \eqref{eq:generic_type4}
with $W = W_3$,
\begin{align}
  \label{eq:W3}
  W_3 &= \begin{bmatrix}
    0 & 1/2 & 1/2 \\
    1/2 & 0 & 1/2 \\
    1/2 & 1/2 & 0
  \end{bmatrix},
\end{align}                    
and $i = 1,\ldots ,3$ \review{to represent the vertex-transitive model
  matching \figref{hex_tilings}~(\textit{right})}. Consider first the
labels ``low/medium/high'' concentrations, say, at a stationary state
$[x_1,x_2,x_3]$. Since the non-homogeneous stationary solution
consists of either low or high concentration we will make the
identification that ``medium'' corresponds to ``high'' concentration,
i.e., $x_2 = x_3$, mimicking the
way~\figref{hex_tilings}~(\textit{right}) can be transformed
into~\figref{hex_tilings}~(\textit{middle}). Conveniently, the
stationary states can now be found by considering the simpler
extension \eqref{eq:W2}--\eqref{eq:generic_type4} since the stationary
relations are the same. Following the approach in the proof of
Proposition~\ref{prop:non-hom_exists} we have
\begin{align}
  \left. \begin{aligned}
      x_1 &= g(x_2)f(x_1) \\
      x_2 &= \frac{g(x_1)+g(x_2)}{2} f(x_2)
    \end{aligned} \right\} \quad
            \left. \begin{aligned}
                h(x_1) &= g(x_2) \\
                H(x_2) &= \frac{g(x_1)}{2}
              \end{aligned} \right\} \quad
                         \left. \begin{aligned}
                             x_1 &= \gamma(x_2) \\
                             x_2 &= \Gamma(x_1)
                           \end{aligned} \right\},
\end{align}
where $H$ and $\Gamma$ are defined in analogy with $h$ and
$\gamma$. Alternating (cyclic) solutions are now found from
\begin{align}
  \label{eg:Gammadef}
  x_1 &= (\gamma \circ \Gamma)(x_1) =: \Gamma_2(x_1),
\end{align}
and as before a sufficient condition for existence would be
$\Gamma_2'(\bar{x}_0) > 1$. Unfortunately this approach fails due to
the existence of multiple cyclic solutions as the numerical experiment
in \figref{3cell_bifurcation} explains. Except for in singular points
there are now \emph{two} pairs of non-homogeneous solutions and the
crossing at the homogeneous solution generally satisfies
$\Gamma_2'(\bar{x}_0) < 1$. Inspired by this graphical motivation, we
instead proceed by assuming that non-homogeneous solutions exist for
\emph{some} parameter combination and we attempt to find points for
which all such non-homogeneous solutions vanish.

\begin{proposition}
  The boundary for existence of non-homogeneous solutions is defined
  by
  \begin{align}
    \label{eq:cond1_exists}
    \frac{g'_1f_2}{2-[f'_2g_2+f_2g'_2]-g_1f'_2} \times
    \frac{f_1g'_2}{1-f'_1g_2} &= 1,
  \end{align}
  where $f_1 = f(\bar{x}_1)$ and similarly for $f'_1$, $g_1$, $g'_2$,
  etc.
\end{proposition}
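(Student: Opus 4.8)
The plan is to recognise the boundary for existence of non-homogeneous solutions as a saddle-node (fold) bifurcation of the fixed-point map $\Gamma_2 = \gamma \circ \Gamma$ defined in \eqref{eg:Gammadef}. As the numerical experiment in \figref{3cell_bifurcation} indicates, away from singular parameters there are \emph{two} pairs of non-homogeneous solutions, and these merge and vanish exactly when a fixed point of $\Gamma_2$ becomes a double root. Hence the defining condition should be the tangency $\Gamma_2'(\bar{x}_1) = 1$ evaluated at the coincident solution, in contrast to the transversal crossing $\Gamma_2'(\bar{x}_0) < 1$ at the homogeneous point.

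First I would record the stationary relations in the form $x_1 = \gamma(x_2)$ and $x_2 = \Gamma(x_1)$ already introduced, where $\gamma = h^{-1}\circ g$ with $h(x) = x/f(x)$, and $\Gamma = H^{-1}\circ(g/2)$ with $H(x) = h(x) - g(x)/2$. Writing the coincident non-homogeneous values as $\bar{x}_1$ (for $x_1$) and $\bar{x}_2$ (for $x_2$), and using the shorthand $f_1 = f(\bar{x}_1)$, $g_2 = g(\bar{x}_2)$, and so on, I would compute $\Gamma_2'(\bar{x}_1) = \gamma'(\bar{x}_2)\,\Gamma'(\bar{x}_1)$ by the chain rule, using $\Gamma(\bar{x}_1) = \bar{x}_2$ at the fixed point.

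Next I would evaluate each factor by implicit differentiation, mirroring the computation of $\gamma'(\bar{x}_0)$ in the proof of Proposition~\ref{prop:non-hom_exists}. For $\gamma$: since $\gamma'(\bar{x}_2) = g'(\bar{x}_2)/h'(\bar{x}_1)$ and $h'(\bar{x}_1) = (1 - g_2 f'_1)/f_1$ after substituting the stationary identity $\bar{x}_1 = g_2 f_1$, one obtains $\gamma'(\bar{x}_2) = f_1 g'_2/(1 - f'_1 g_2)$, which is the second factor in \eqref{eq:cond1_exists}. For $\Gamma$: from $\Gamma'(\bar{x}_1) = (g'_1/2)/H'(\bar{x}_2)$ together with $H'(\bar{x}_2) = h'(\bar{x}_2) - g'_2/2$ and the identity $\bar{x}_2 = \tfrac{1}{2}(g_1 + g_2)f_2$, a short calculation gives $\Gamma'(\bar{x}_1) = g'_1 f_2/(2 - g_1 f'_2 - g_2 f'_2 - f_2 g'_2)$, the first factor. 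Multiplying the two and setting the product equal to $1$ yields precisely \eqref{eq:cond1_exists}.

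The bookkeeping above is routine; the main conceptual obstacle is justifying that the existence boundary is governed by the tangency $\Gamma_2' = 1$ rather than by the crossing condition $\Gamma_2'(\bar{x}_0) > 1$ that sufficed in the two-cell analysis. The difficulty is precisely that, as noted in the text, $\Gamma_2'(\bar{x}_0) < 1$ generically holds here, so the non-homogeneous branches do not emanate from the homogeneous fixed point but appear through a fold away from $\bar{x}_0$. Making this rigorous would require checking the standard saddle-node nondegeneracy conditions (that $\Gamma_2 - \mathrm{id}$ and its first derivative vanish while the second derivative and the parameter derivative are nonzero at the fold), which the monotonicity of $f$ and $g$ and the graphical evidence strongly suggest, but which one should verify to exclude spurious tangencies that are not genuine existence boundaries.
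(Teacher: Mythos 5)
Your proposal is correct and follows essentially the same route as the paper: both identify the existence boundary with a double root of the fixed-point map $\Gamma_2 = \gamma\circ\Gamma$, impose $\Gamma_2'(\bar{x}_1) = \gamma'(\bar{x}_2)\Gamma'(\bar{x}_1) = 1$, and obtain the two factors of \eqref{eq:cond1_exists} by implicit differentiation of $h$ and $H$ at the stationary identities; your explicit computations of both factors agree with the stated condition. The only difference is that you flag the saddle-node nondegeneracy conditions as needing verification, whereas the paper rests that step on the graphical evidence of \figref{3cell_bifurcation} --- a reasonable extra caution, but not a divergence in method.
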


\begin{proof}
  By the graphical motivation in
  \figref{3cell_bifurcation}~\textit{(right)} we search for a double
  root at the non-homogenous solution. That is, for which
  $\Gamma_2'(\bar{x}_1) = \Gamma'(\bar{x}_1)\gamma'(\bar{x}_2) =
  1$. We find the derivatives through implicit differentiation leading
  to the two factors in the expression \eqref{eq:cond1_exists}.
\end{proof}

\begin{figure}[t]
  \centering
  \includegraphics{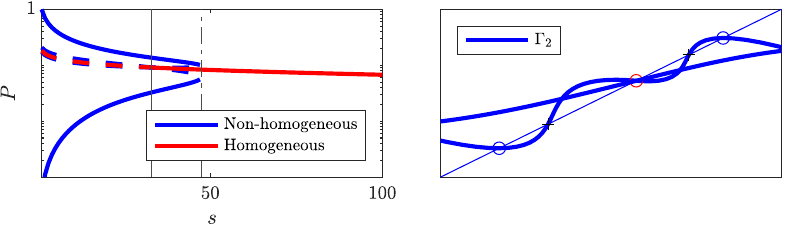}
  \caption{Stationary solutions under vanishing
    feedback. \textit{Left:} bifurcation diagram \review{of the Hes1
      protein $P$} as a function of scaling $s$;
    $\alpha_N \mapsto s^{-1}\alpha_N$. Below the boundary point
    defined by \eqref{eq:cond1_exists} (\textit{dash-dot vertical})
    there exist two pairs of non-homogeneous solutions, but only one
    pair is stable. Below the critical value \eqref{eq:cond1per}
    (\textit{solid vertical}) the homogeneous solution is also
    unstable. \textit{Solid/dashed} indicates stable/unstable
    solutions, respectively. \textit{Right:} illustration of the fix
    point problem \eqref{eg:Gammadef} when the parameter $s$ is just
    above and below the limit for existence of non-homogeneous
    stationary solutions. When $s$ increases, the pair of unstable
    solutions (\textit{plus-signs}) approaches the pair of stable
    solutions (\textit{circles}) until they collapse into a double
    root (condition \eqref{eq:cond1_exists}), after which only the
    homogeneous solution remains.}
  \label{fig:3cell_bifurcation}
\end{figure}

\begin{SCfigure}
  \includegraphics{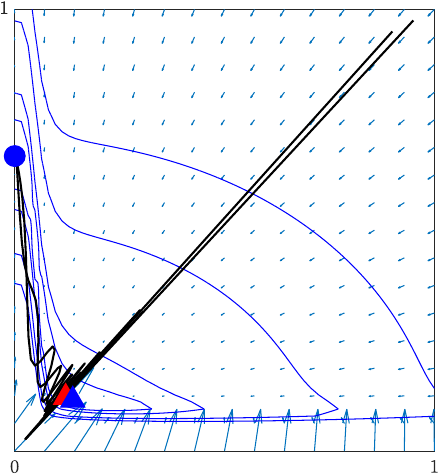}
  \caption{Phase-plot of the three-cell problem in the plane which
    contains all three stationary points. \textit{Circle:} stable
    non-homogeneous solution, \textit{triangles:} unstable solutions
    with red the homogeneous one. \textit{Level curves} according to
    the Euclidean norm of the right-hand side and \textit{arrows}
    denote the direction of the flow. Two sample trajectories starting
    from the upper right corner are also shown, with the oscillations
    visible before the fate decision.}
  \label{fig:3cell_phase_plot}
\end{SCfigure}

To sum up, under the proposed parameters from~\tabref{parameters} and
under weakened feedback $\alpha_N \to 0+$, the model undergoes a
transition where the typical checkerboard patterning is lost. The
two-dimensional generalisation into three cells, as given by
\eqref{eq:generic_type4}--\eqref{eq:W3}, displays the same stability
of post fate decision patterning as consistently observed for the full
model when simulated over a grid of multiple connected cells. As an
illustration, the overall typical dynamics of the three-cell system is
summarised in \figref{3cell_phase_plot}

\subsection{System's Size Convergence}
\label{sec:ode_rdme}

Finally we want to at least qualitatively investigate the RDME model
\eqref{eq:RDME1}--\eqref{eq:RDME2}. As with the ODE model, we first
investigate its stability to scaling in the single parameter
$\alpha_N$. Unlike the ODE model, one is now forced to use a
statistical procedure to estimate when the non-homogeneous solutions
are lost. This is complicated by the fact that the level of noise is
rather large around the transition points, We find in
\figref{umod_bifurcation} that while the bifurcation behaviour is not
as clear as in \figref{3cell_bifurcation}, there is a gradual change
of system behaviour approximately around the critical value(s) of
scaling $s$, where the system behaviour changes from non-homogeneous
(patterned) into homogeneous. In \figref{umod_bifurcation} we select
the top 25\% and bottom 25\%, respectively, of the stationary Hes1
protein levels, and plot their means. We use the diameter of each
sample as a measure of the spread and judge if the distribution is
bimodal or not. Other statistical procedures yield slightly varying
results but overall, we find that the RDME behaviour over a mesh is
fairly well predicted by the solutions and critical points of the
three cell problem.

\begin{SCfigure}[][b]
  \centering
  \includegraphics{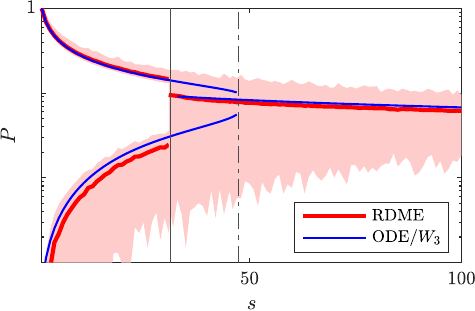}
  \caption{Same scaling as in \figref{3cell_bifurcation}, but
    instead solving the RDME-model on a two-dimensional hex-grid and
    estimating the mean behaviour of high and low protein
    concentrations among the cells. The stable solutions of the ODE
    for the three cell problem from \figref{3cell_bifurcation} are
    included for reference.}
  \label{fig:umod_bifurcation}
\end{SCfigure}

Finally, we try to measure the quality of the fate decision in the
RDME model. For this purpose, recall the coupling matrix $W_2(i,j)$ as
defined in \eqref{eq:W2}, where $i$, $j$ are the two different fates
low/high Hes1 protein levels, respectively. For a perfect pattern with
two final expression modes (\figref{hex_tilings} (\emph{middle})), the
coupling $W_2$ is as given in \eqref{eq:W2}. To evaluate the behaviour
of the RDME model, we seek to estimate the effective coupling matrix
from observations. By running multiple independent simulations and
splitting the cells of the resulting stationary process into low/high
Hes1 expression, we can count how often the specific coupling
high-high occurs out of all possible couplings, that is, which
corresponds to $W_2(2,2)$, \review{the ``patterning coefficient''
  $p$}. We next treat these counts as independent Bernoulli trials and
hence the set-up can be practically approached as a statistical
estimation problem for the single Bernoulli parameter
\review{$p$}. The results are summarised in \figref{patterning} and
indicate that even at relatively large levels of noise (corresponding
to small cell volume), the patterning is quite close to the perfect
one. For example, for mouse embryonal stem cells with a size of about
$50 \mu m^3$, the GRN patterning behaviour is comparably stable to
intrinsic cellular noise, with \review{the patterning coefficient}
$\hat{p} = 0.50 (0.43,0.57)$ (95\% CI).

\begin{SCfigure}
    \centering
    \includegraphics{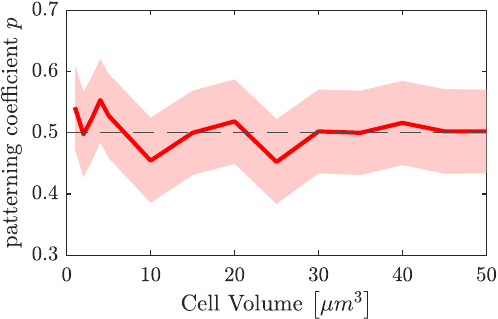}
    \caption{A measure of connectivity between high and low protein
      cells (perfect patterning corresponds to \review{patterning
        coefficient} $p = 1/2$) for a sequence of volumes in the RDME
      model \eqref{eq:RDME1}. The numerical problem is treated as a
      statistical estimation problem for which the confidence
      intervals are indicated as a shaded area. See text for details.}
    \label{fig:patterning}
\end{SCfigure}


\section{Discussion}
\label{sec:discussion}

Ultimately, we developed a first principle ODE model of the Hes1
pathway and its direct interactions with the Notch pathway, capturing
oscillations followed by cell differentiation. We chose parameters
based on biological data as much as feasible. By reducing our initial
ODE model to lower dimensions, we were able to analyse the
differentiation process. Furthermore, we extended the ODE model into a
spatial stochastic RDME model to investigate the system's robustness
to intrinsic noise. We found that the transient oscillatory and the
differentiation processes observed in the ODE model were well
preserved in the RDME model. In this way we have found multiple
interlinked ways to model this signalling process, allowing us to
investigate multiple aspects of the behaviour of the Hes1-Notch
pathway. Linking modelling frameworks in this way forced us to think
hard about parameter scaling and parameterisation issues, particularly
so in relation to the scarce availability of \review{quantitative}
experimental data.


Our models capture essential aspects of the Hes1-Notch pathway,
although they do not replicate every observed behaviour. We observed a
few dampened oscillations followed by stable patterning both in the
presence and absence of noise. However, the number of oscillations,
their stability as well as their length is limited by parameter choice
and the interpretation of the states included in the model.
As stated in \cite{Kobayashi2014ExpressionDiseases}, ``[t]he molecular
mechanism by which cells exhibit different (oscillatory vs.\
sustained) expression modes of Hes1 is still unknown''. Nevertheless,
our results suggest that these distinct expression modes may be
intrinsic to the GRN, and thus independent of external regulation.

\review{Despite these promising findings, our models have some
  limitations.  Our model's oscillations, for example, are longer and
  more dampened than found in experimental data
  \cite{Hirata2002OscillatoryLoop, Imayoshi2013,
    Marinopoulou2021hes1neurons}. However, more recent results suggest
  greater variability in oscillation periods in neural precursor cells
  with periods between 2--4 hours \cite{elAzhar2024hes1neurons} a
  range consistent with our results.  Additionally, the final
  patterning behaviour of the RDME model (cf.\
  \figref{umod_bifurcation} and \ref{fig:patterning}) remains largely
  stable under noise, closely resembling the deterministic case. This
  suggests that noise does not crucially affect patterning behaviour
  in agreement with previous findings that stochastic behaviour in the
  Hes1 signalling pathway is stable to noise
  \cite{Phillips2016hes1mir9}. Such stability is desirable to such a
  developmentally significant differentiation pathway.  While both the
  ODE and RDME behaviour (cf.~\figref{ode_panel} and
  \figref{urdme_panel}) show similarities in oscillatory and fate
  decision behaviours, noise in the stochastic model allows for
  earlier fate decision in individual cells. As a result, oscillations
  appear less pronounced at the population level, again consistent
  with previous findings showing earlier differentiation in smaller
  systems \cite{Phillips2016hes1mir9}.}


Our focus was on the Hes1-Notch pathway isolated from other cellular
and signalling processes though we have made simplifications. We did
not account for processes such as the dimerisation of the Hes1 protein
before it induces mRNA production
\cite{Kageyama2007TheEmbryogenesis}\review{, its interactions with the
  micro RNA miR-9 \cite{Goodfellow2014mir9,Phillips2016hes1mir9}} or
other interactions with Notch effectors such as Mash1
\cite{Shimojo2011DynamicCells}.  Additionally, the Hes1 pathway
interacts with multiple other pathways, such as the cell cycle
\cite{Pfeuty2015ADynamics}, RBP-J and Jagged
\cite{Kageyama2007TheEmbryogenesis}, as well as the JAK-STAT pathway
\cite{Kobayashi2014ExpressionDiseases}. These interactions could
potentially stabilise the oscillations, which are currently severely
dampened in our model. Furthermore, the addition of, for example,
extra states or a delay to simulate this behaviour in an ODE system
has been shown to allow for more stable oscillations
\cite{Goodwin1965model,Griffith1968goodwin, Monk2003OscillatoryDelays,
  Jensen2003Hes1DDE, Zeiser2007Hes1Dimerisation}.  However, such an
extension of our model would further complicate mathematical analysis,
which conflicts with our aim of balancing analysability and model
complexity.

A major challenge in modelling cellular signalling pathways is the
limited availability of \review{quantitative} data, leading to many
models relying heavily or even exclusively on ad hoc parameter values
chosen to fit expected model behaviour
\cite{Gunawardena2010parameters}. As the aim of many Hes1 and
Delta-Notch models is to hypothesise about how specific molecular
interactions induce cellular and population-level behaviour, the
significance of results not based on biologically relevant parameters
have to be considered with caution. \review{While qualitative time
  series data is more widely available, e.g.,
  \cite{elAzhar2024hes1neurons, Imayoshi2013}, fairly advanced
  parameter inference is needed to make effective use of this type of
  data, adding significant complexity.} In our work, this data
sparsity has particularly affected the parameters
$\alpha_i, K_M, K_n, k$ and $h$. Nevertheless, we have scaled our
model behaviour, and thus our parameters, to align with expected
biological concentrations for the different constituents to improve
the significance of our results. \review{Since our primary focus is on
  model development and analysis, we leave more advanced parameter
  inference for future work.}  As a point in favour of our approach,
the corresponding RDME model successfully simulates the Hes1-pathway
down to the resolution of single species.

There are several promising ways to extend the work presented
here. Extending the model to incorporate interactions with additional
pathways, such as the JAK-STAT pathway or others, could further
clarify the dynamics of Hes1 expression. Similarly, adding detail to
the Hes1-Notch pathway itself \review{to, for example, replace the
  phenomenological Hill functions with mechanistic models of gene
  regulation} could improve model behaviour and give further insights
into its oscillatory and sustained expression modes.  Moreover, as the
change between these two modes of expression happens during embryonal
development to allow for sufficient numbers of neurons and glial cells
to develop \cite{Kobayashi2014ExpressionDiseases}, considering the
stability of this signalling process in a growing population becomes
exceedingly relevant. Additionally, Hes1 is an important factor in the
development of other tissues as well as different cancer types
\cite{Kobayashi2014ExpressionDiseases} so investigations of the
differences between Hes1 interactions in these different tissues could
be of interest as well.

To sum up, we have used different modelling approaches to capture the
essential behaviours of the Hes1-Notch signalling pathway during
neuronal development. By balancing simplicity and analytical
tractability, we have constructed models that are both amenable to
mathematical analysis and biologically meaningful. Using these models
we capture the intrinsic expression of both oscillations and final
patterning of this pathway while basing parameters on experimental
data as much as possible. \review{While there are improvements or
  different modelling emphases that can be chosen, our work promotes
  further theoretical understanding of the oscillatory and
  differentiation processes of this critical signalling pathway and
  the trade-offs between analysability and feasibility of the
  computational modelling.}

\subsection{Availability and reproducibility}
\label{subsec:reproducibility}

The computational results can be reproduced with release 1.4 of the
URDME open-source simulation framework \cite{URDMEpaper}, available
for download at \url{www.urdme.org}. Refer to the Hes1 directory and
the associated \texttt{README.md} in the DLCM workflow.


\printbibliography[title={References}]


\clearpage
\appendix

\section{Parameterisation}
\label{app:parameters}

From \cite{Ho2018abundance}, we know that the average amount of Hes1
protein in a \textit{S.~cerevisiae} cell is $8104$ molecules. Since
the yeast cells are roughly similar in size to mouse embryonal stem
cells, we assume a size of $50 \mu m^3$ for both the
\textit{S.~cerevisiae} cells and the mouse embryonal stem cells. This
gives a wanted Hes1 protein concentration of $0.269 \mu M$. For the
concentrations of the remaining constituents, we use data from the
PaxDB database \cite{Huang2023paxdb} using the values for the
integrated whole organism of the mouse for Dll1, Notch1 and Hes1 while
we use the Ngn2 value for the integrated whole organism of humans as
this value was not available for mice. Since the values in PaxDB are
given in parts per million, we use them to determine the
concentrations of the constituents relative to each other. The values
used and their scalings relative to Hes1 protein are shown in
\tabref{par_scale}.  Finally, we use results from \cite{Yu2006mrna} to
scale the Hes1 mRNA concentration. The authors show that there are
$4.2 \pm 0.5$ protein molecules per mRNA molecule which gives a Hes1
mRNA concentration of $0.0613 \mu M$.

Using this information, we are able to scale $\alpha_i, K_M$ and $K_n$
such that the mean stationary behaviour is equal to those wanted
concentrations in~\tabref{par_scale}. After experimenting with this
set-up, we fixed the \review{Hill-function dissociation constants $K_M$ and $K_n$} at suitable values
such that $\alpha_i$ are now uniquely defined as a function of all the
other parameters.
%
%
To find the distributions and confidence intervals of the activation
rates, we perturb all degradation rates $\mu_i$ assuming they are
log-normally distributed with a $68\%$ confidence interval as
described in \tabref{parameters}. We also perturb the desired
concentrations in \tabref{par_scale}, assuming an ad hoc level of
$5\%$ relative uncertainty as well as them being log-normally
distributed. This way, we find perturbed activation rates $\alpha_i$
by fitting model behaviour to our previous requirements. We then fit
lognormal distributions to the thus sampled $\alpha_i$ to find their
confidence intervals as given in \tabref{parameters}.

Finally, the distributions for all parameters of the full ODE model
in~\tabref{parameters} induce distributions of the reduced parameters,
which we again find by fitting lognormal distributions to the
perturbed reduced parameters, cf.\ \tabref{red_alternatives}.

\begin{table}[t]
  \centering
  \begin{tabular}{llr}
        \hline
    Molecule & PaxDB Value & Scaled Concentration \\
    \hline
    Dll1 & $0.03$ ppm [\textit{M.~musculus}] & $0.0135 \ \mu M$ \\
        Notch1 & $2.2$ ppm [\textit{M.~musculus}] & $0.925 \ \mu M$ \\
    Hes1 mRNA & -- & $0.0613 \ \mu M$ \\
    Hes1 protein & $0.639$ ppm [\textit{M.~musculus}] & $0.269 \ \mu M$ \\
    Ngn2 & $0.12$ ppm [\textit{H.~sapiens}] & $0.0505 \ \mu M$ \\
    \hline
  \end{tabular}
  \caption{Concentration information for the four protein constituents
    of the model from \cite{Huang2023paxdb} and scaled to fit relative
    to Hes1 protein concentration.}
  \label{tab:par_scale}
\end{table}

\section{Dimensional Reduction}
\label{app:red}

To reduce our ODE model \eqref{eq:ode} down to two equations, we rely
on quasi-steady state assumptions. To show a typical reduction, we
choose the third alternative as shown in \tabref{red_alternatives},
i.e.\ a type 1 reduction. Reductions of type 2 and 3 work equivalently
and give the same values for $a$ and $b$.  The substitutions in this
case are
\begin{equation*}
  \quad M \leftrightarrow x, \quad D \leftrightarrow y.
\end{equation*}

We assume that the following variables are approximately stationary,
\begin{equation*}
    \begin{aligned}
        \dot{N} \approx 0 \implies N &= \frac{\alpha_N}{\mu_N} \langle \Din \rangle,\\
        \dot{P} \approx 0 \implies P &= \frac{\alpha_P}{\mu_P} M, \\
        \dot{n} \approx 0 \implies n &= \frac{\alpha_n}{\mu_n} \frac{1}{1+(P/K_n)^k},
    \end{aligned}
\end{equation*}
where $\langle \Din \rangle := \sum_i w_i D_i$.

This gives
\begin{align*}
  &\left. \begin{array}{rcl}
            \dot{M} &=& c_1 \frac{\langle \Din \rangle}{1 + c_2 M^k} - \mu_M M \\
            \dot{D} &=& c_3 \frac{1}{1 + c_4 M^h} - \mu_D D
          \end{array} \right\}
                        \intertext{with}
                        c_1 &= \frac{\alpha_N \alpha_M}{\mu_N}, 
                              \quad c_2 = \left( \frac{\alpha_P}{\mu_P K_M} \right)^k, 
                              \quad c_3 = \frac{\alpha_D \alpha_n}{\mu_n}, 
                              \quad c_4 = \left( \frac{\alpha_P}{\mu_P K_n} \right)^h.
\end{align*}

Using the scaling
\begin{equation*}
    x = \frac{M}{M_0}, \quad y = \frac{D}{D_0}, \quad \Tilde{t} = \frac{t}{\tau},
\end{equation*}
we find the non-dimensionalised equations
\begin{align*}
  &\left. \begin{array}{rcl}
            \dot{x} &=& \frac{\langle \yin \rangle}{a + x^k} - x \\
            \dot{y} &=& v \left( \frac{1}{1 + b x^h} - y \right)
          \end{array} \right\}
\end{align*}
where
\begin{equation*}
  M_0 = \left( \frac{c_1 D_0}{\mu_M c_2} \right)^{\frac{1}{k+1}},
  \quad D_0 = \frac{c_3}{\mu_D},
  \quad \tau = \frac{1}{\mu_M},
\end{equation*}
and
\begin{equation*}
  a^{-1} = c_2 M_0^k,
  \quad b = c_4 M_0^h,
  \quad v = \frac{\mu_D}{\mu_M}.
\end{equation*}

\end{document}